\newcommand*\linenomathpatch[1]{%
   \expandafter\pretocmd\csname #1\endcsname {\linenomath}{}{}%
   \expandafter\pretocmd\csname #1*\endcsname{\linenomath}{}{}%
   \expandafter\apptocmd\csname end#1\endcsname {\endlinenomath}{}{}%
   \expandafter\apptocmd\csname end#1*\endcsname{\endlinenomath}{}{}%
 }
\newcommand*\linenomathpatchAMS[1]{%
    \expandafter\pretocmd\csname #1\endcsname {\linenomathAMS}{}{}%
    \expandafter\pretocmd\csname #1*\endcsname{\linenomathAMS}{}{}%
    \expandafter\apptocmd\csname end#1\endcsname {\endlinenomath}{}{}%
    \expandafter\apptocmd\csname end#1*\endcsname{\endlinenomath}{}{}%
}
\let\linenomathAMS\linenomathWithnumbers
\patchcmd\linenomathAMS{\advance\postdisplaypenalty\linenopenalty}{}{}{}
\let\linenomathAMS\linenomathNonumbers
   \def\MR#1{}
\newcommand{\typei}{type-i}
\newcommand{\typeii}{type-ii}
\DeclareMathOperator*{\rank}{rank}
\DeclareMathOperator*{\inertia}{inertia}
\DeclareMathOperator{\twop}{tw}
\newtheorem{theorem}{Theorem}
\newtheorem{lemma}{Lemma}
\newtheorem{claim}{Claim}
\newcommand*\patchAmsMathEnvironmentForLineno[1]{%
\expandafter\let\csname old#1\expandafter\endcsname\csname #1\endcsname
\expandafter\let\csname oldend#1\expandafter\endcsname\csname end#1\endcsname
\renewenvironment{#1}%
{\linenomath\csname old#1\endcsname}%
{\csname oldend#1\endcsname\endlinenomath}}%
\newcommand*\patchBothAmsMathEnvironmentsForLineno[1]{%
\patchAmsMathEnvironmentForLineno{#1}%
\patchAmsMathEnvironmentForLineno{#1*}}%
\begin{document}
\onehalfspace

\title[Efficient diagonalization of symmetric matrices]{Efficient diagonalization of symmetric matrices associated with graphs of small treewidth}
\keywords{Treewidth, congruent matrices, efficient algorithms.}
\subjclass{15A06,15A18,05C62,65F30}
\author{Martin F\"{u}rer}
\address{Department of Computer Science and Engineering, Pennsylvania State University}
\email{\tt fhs@psu.edu}
\author{Carlos Hoppen}
\address{Instituto de Matem\'atica e Estat\'{i}stica, Universidade Federal do Rio Grande do Sul}
\email{\tt choppen@ufrgs.br}
\author{Vilmar Trevisan}
\address{Instituto de Matem\'atica e Estat\'{i}stica, Universidade Federal do Rio Grande do Sul}
\email{\tt trevisan@mat.ufrgs.br}
\pdfpagewidth 8.5 in
\pdfpageheight 11 in

\begin{abstract}
Let $M=(m_{ij})$ be a symmetric matrix of order $n$ whose elements lie in an arbitrary field $\mathbb{F}$, and let $G$ be the graph with vertex set $\{1,\ldots,n\}$ such that distinct vertices $i$ and $j$ are adjacent if and only if $m_{ij} \neq 0$.  We introduce a dynamic programming algorithm that finds a diagonal matrix that is congruent to $M$. If $G$ is given with a tree decomposition $\mathcal{T}$ of width $k$, then this can be done in time $O(k|\mathcal{T}| + k^2 n)$, where $|\mathcal{T}|$ denotes the number of nodes in $\mathcal{T}$. Among other things, this allows the computation of the determinant, the rank and the inertia of a symmetric matrix in time $O(k|\mathcal{T}| + k^2 n)$.
\end{abstract}

\thanks{A conference version of this paper appeared in the Proceedings of the $47^{th}$ International Colloquium on Automata, Languages and Programming (ICALP 2020)~\cite{icalp}.}

\thanks{C. Hoppen acknowledges the support of CNPq~308054/2018-0 and FAPERGS~19/2551-0001727-8. V. Trevisan acknowledges partial support of CNPq grants 409746/2016-9 and 303334/2016-9, CAPES under project MATHAMSUD 18-MATH-01 and FAPERGS under Project  PqG 17/2551-0001. CNPq is Conselho Nacional de Desenvolvimento Cient\'{i}fico e Tecnol\'{o}gico, CAPES is Coordena\c{c}\~{a}o de Aperfei\c{c}oamento de Pessoal de N\'{i}vel Superior and FAPERGS is Funda\c{c}\~{a}o de Amparo \`{a} Pesquisa do Estado do Rio Grande do Sul.}

\maketitle

\section{Introduction}
\label{sec:intro}
The aim of this paper is to associate symmetric matrices with diagonal matrices by performing elementary operations. Throughout the paper, we shall assume that readers are familiar with elementary row and column operations, matrices in row echelon form and Gaussian elimination, as defined in~\cite{Meyer2000,Stewart1995} or in most textbooks of basic linear algebra. Suppose that we start with a square matrix $M$ whose elements lie in an arbitrary field $\mathbb{F}$ and sequentially perform elementary operations in rounds, so that at each round we perform an elementary row operation followed by the same column operation. This produces a matrix $N = P M P^T$, where $P$ is a non-singular matrix. We say that $M$ and $N$ are {\em congruent}, which is denoted by $M \cong N$. Given a symmetric matrix $M$, we wish to \emph{diagonalize} $M$, by which we mean to find a diagonal matrix $D$ that is congruent to $M$.

Finding a diagonal matrix $D$ that is congruent to a symmetric matrix $M$ of order $n$ using elementary row and column operations is a basic operation in linear algebra. It allows the computation of the determinant and of the rank of $M$, for instance. It may also be used to compute the \emph{inertia} of a matrix, namely the triple $(n_+,n_{-},n_0)$ where $n_+$, $n_{-}$ and $n_0$ are the number of positive, negative and zero eigenvalues of $M$, respectively, with multiplicity. Moreover, matrix congruence naturally appears in the context of Gram matrices associated with a quadratic form (or bilinear form) on a finite-dimensional vector space. In this context, two matrices are congruent if and only if they represent the same quadratic form with respect to different bases, and finding a diagonal matrix $D$ that is congruent to a symmetric matrix $M$ allows the classification of the quadratic form.

Any symmetric matrix $M=(m_{ij})$ of order $n$ may be naturally associated with a graph $G$ with vertex set $[n]=\{1,\ldots,n\}$ such that distinct vertices $i$ and $j$ are adjacent if and only if $m_{ij} \neq 0$. We say that $G$ is the \emph{underlying graph} of $M$. This allows us to employ structural decompositions of graph theory to deal with the nonzero entries of $M$ in an efficient way. One such decomposition is the tree decomposition, which has been extensively studied since the seminal paper of Robertson and Seymour~\cite{RobertsonS86}, but had already been independently studied in various contexts since the seventies, see \cite{BB1972,Halin1976}. The graph parameter associated with this decomposition is called \emph{treewidth}.

Having a tree decomposition of a graph $G=(V,E)$ with small width has proved to be quite useful algorithmically. It has been used to design algorithms for NP-complete or even harder problems that are efficient on graphs of bounded width, by which we mean that their running time is given by $O(f(k) n^c)$ for a constant $c$ and an arbitrary computable function $f$, where $k$ is the width of the graph. In complexity theory, this means that the problems are fixed parameter tractable (FPT). Even if $f$ is at least exponential, such algorithms are often quite practical for graphs with small treewidth. We refer to~\cite{CyganFKLMPPS2015,DowneyF2013,FlumG2006,Niedermeier2006} for a general introduction to fixed parameter tractability and to~Bodlaender and Koster~\cite{Bodlaender:2008} for FPT algorithms based particularly on the treewidth. Unfortunately, computing the treewidth of a graph $G$ is NP-hard in general~\cite{ACP87}; however, Bodlaender~\cite{Bodlaender1996} has shown that, for any fixed constant $k$, there is an algorithm with running time $f(k)n$ which decides whether an $n$-vertex graph has treewidth at most $k$, and outputs a corresponding tree decomposition if the answer is positive (where $f(k)=2^{O(k^3)}$). More recently, Fomin et al.\ \cite{FominLSPW2018} described an algorithm that has running time $O(k^7 n\log{n})$ and either correctly reports that the treewidth of $G$ is larger than $k$, or constructs a tree decomposition of $G$ of width $O(k^2)$. When considering the complexity of an algorithm that uses a tree decomposition of the input graph, we assume that the tree decomposition is given along with the input graph.

One of the early applications of tree decompositions has been in Gaussian elimination, particularly when the input matrix is sparse. To take advantage of the sparsity, the goal is to find a pivoting scheme (or elimination ordering) that minimizes the \emph{fill-in}, defined as the set of matrix positions that were initially 0, but became nonzero at some point during the computation. For instance, to minimize the fill-in during Gaussian elimination, Rose~\cite{Rose1972}, and Rose, Tarjan and Lueker~\cite{RoseTL1976} considered symmetric matrices (see also~\cite{RoseT1978} for matrices that are not necessarily symmetric) and associated them with an underlying graph $G$ as in the present paper. Among other things, they devised a linear-time algorithm to find a \emph{total elimination ordering} in $G$\footnote{Their algorithm works for all graphs for which such an ordering exists, which are precisely the chordal graphs~\cite{FG65}.}, namely an ordering of the vertices such that, for any vertex $v \in V(G)$, the set of neighbors of $v$ that appear after $v$ in the ordering forms a clique. It turns out that no fill-in is produced when performing Gaussian elimination using rows in this order, assuming that the diagonal of the input matrix is nonzero and that no accidental cancellations occur (an \emph{accidental cancellation} happens when a nonzero element becomes zero as a by-product of an operation aimed at cancelling a different element on its row). We would like to point out that, despite seeming to be helpful at first sight, accidental cancellations may mess up with the pivoting scheme, as an unexpected zero in the diagonal can force the algorithm to use a different pivot than originally planned, potentially leading to fill-in.

Based on these ideas, Radhakrishnan, Hunt and Stearns~\cite{RHS92} designed an $O(k^2n)$ algorithm to turn a matrix $M$ of order $n$ whose underlying graph has treewidth at most $k$ into row echelon form, under the assumptions that no accidental cancellations occur and that a tree decomposition of width at most $k$ is part of the input. We note that a worst-case bound of $O(k^2n)$ is best possible for algorithms based on elementary row operations. If the assumption about accidental cancellations is dropped, Fomin et al.~\cite{FominLSPW2018}\footnote{The matrices in~\cite{FominLSPW2018} are not necessarily symmetric (in fact, not necessarily square), and they associate an $m \times n$-matrix with a bipartite graph whose partition classes have size $m$ and $n$.} devised a clever pivoting scheme to get row echelon form with relatively small fill-in, but their algorithm requires $O(k^3 n)$ field operations in the worst case. On the other hand, it achieves the best possible bound of $O(k^2n)$ for graphs with a path or a tree-partition decomposition of width $k$.

We note that both ~\cite{RHS92} and~\cite{FominLSPW2018} fit into the recent trend of ``FPT within P'', which investigates fundamental problems that are solvable in polynomial time, but for which a lower exponent may be achieved using an FPT algorithm that is also polynomial in terms of the parameter $k$ (see~\cite{GiannopoulouMN2017}). The same is true for our paper.

Our aim is not to perform Gaussian elimination, but instead to find a diagonal matrix that is congruent to an input symmetric matrix $M$ of order $n$ whose underlying graph has treewidth $k$. For this problem, we do not make any assumption about the absence of accidental cancellations and our algorithm runs in time $O(k^2n)$ provided that the input includes a ``compact'' tree decomposition of the underlying graph. This may be formalized as follows.
\begin{theorem}\label{main_theorem}
Given a symmetric matrix $M$ of order $n$ and a tree decomposition $\mathcal{T}$ of width $k$ for
the underlying graph of $M$, algorithm \texttt{CongruentDiagonal} (see Sect.~\ref{sec:algorithm}) produces a diagonal matrix $D$ congruent to $M$ in time $O(k|\mathcal{T}| + k^2n)$.
\end{theorem}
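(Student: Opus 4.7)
The plan is to perform a bottom-up dynamic-programming traversal of the tree decomposition $\mathcal{T}$. Root $\mathcal{T}$ at an arbitrary node, and for every vertex $v$ of $G$ let $t_v$ be the highest node whose bag contains $v$; we schedule the elimination of $v$ for the moment when the subtree rooted at $t_v$ has been fully processed. By the subtree property of tree decompositions, $v$ appears only in bags within that subtree, so at the time $v$ is eliminated all surviving neighbors of $v$ in the currently updated symmetric matrix lie in the bag $B(t_v)$ and therefore number at most $k$. This is the standard tree-decomposition elimination order, and it drives the entire algorithm.

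To eliminate $v$, the algorithm performs a single symmetric pivot within the at most $(k+1)\times(k+1)$ submatrix indexed by $B(t_v)$. If the current diagonal entry $a_{vv}$ is nonzero, we pivot on it: update $a_{uw}\gets a_{uw}-a_{uv}a_{vw}/a_{vv}$ for all pairs $u,w$ of bag-mates of $v$, and then zero out the $v$-row and $v$-column. If $a_{vv}=0$ but some bag-mate $w$ has $a_{vw}\ne 0$, a preliminary symmetric congruence localized to $\{v,w\}$ (adding row $w$ to row $v$ together with column $w$ to column $v$, or applying the explicit diagonalization of the $2\times 2$ block $\bigl(\begin{smallmatrix}0&a\\a&0\end{smallmatrix}\bigr)$ when needed) produces a nonzero pivot, and we then proceed as before. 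If the whole $v$-row of the active submatrix is already zero, then $v$ is diagonalized with entry $a_{vv}$ and nothing is done.

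The correctness relies on the invariant, proved by induction on the number of elimination steps, that at every moment the nonzero off-diagonal entries of the current symmetric matrix, restricted to not-yet-eliminated vertices, are contained in edges of the graph whose cliques are the bags of $\mathcal{T}$ not yet fully processed. Initially this holds because the underlying graph of $M$ is a subgraph of that bag-graph. When $v$ is eliminated, any fill-in occurs at a pair $(u,w)$ with $u,w\in B(t_v)$, and by the subtree property this pair continues to co-occur in some bag that will still be processed; hence the invariant is preserved. A consequence is that every pivot modifies only $O(k^2)$ entries.

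For the complexity we maintain only the at most $(k+1)^2$ matrix entries indexed by the current bag. Traversing an edge of $\mathcal{T}$ costs $O(k)$ to bring into or out of scope the entries corresponding to the vertex entering or leaving the bag, accounting for the $O(k|\mathcal{T}|)$ term. Each of the $n$ vertices is eliminated exactly once at cost $O(k^2)$, giving $O(k^2 n)$. Summing yields the claimed bound $O(k|\mathcal{T}|+k^2 n)$. The step I expect to be the main obstacle is the zero-pivot case: the remedy must be confined to $B(t_v)$ in order to preserve the bag-containment invariant and must be implementable in $O(k^2)$ time, ruling out naive global row swaps. Once this local remedy is fixed and integrated with the bookkeeping of which entries are ``in scope'' at each tree node, the rest of the argument reduces to a routine induction on the processed subtree.
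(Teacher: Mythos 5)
Your elimination order is the same one the paper uses (eliminate $v$ at the node that forgets it), and your invariant about nonzero off-diagonals being confined to bags of not-yet-processed nodes is also essentially what the paper proves. But your handling of the zero-pivot case takes a genuinely different route from the paper's, and it is exactly there that a real gap appears.

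The paper does \emph{not} try to manufacture a nonzero pivot on the spot. When $d_v = 0$ and $v$ has no usable partner, the row of $v$ is parked in a small buffer (the ``type-i'' rows forming $N_i^{(1)}$, kept in row echelon form), and $v$ is only diagonalized later, when some other vertex being forgotten has a nonzero entry against it. The structural reason for this design is recorded in Lemma~\ref{tech_lemma0}(c): in the paper's algorithm, whenever a multiple of row $w$ is added to another row, $w$ is either being diagonalized in that very step or is a type-i (buffered) row --- it is \emph{never} a vertex that is still in the current bag. This discipline is exactly what makes Claim~\ref{claim:joinbox} true, which in turn is what lets \texttt{JoinBox} combine the two children's contributions additively via $N_i^{\ast(2)} = N_j^{(2)} + N_\ell^{(2)}$: the congruence operations coming out of one subtree are guaranteed not to disturb entries supported in the other subtree.

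Your remedy necessarily violates this. Whether you implement it as $R_v \leftarrow R_v + cR_w$ or via the $\bigl(\begin{smallmatrix}0&a\\a&0\end{smallmatrix}\bigr)$ diagonalization, you add a multiple of $R_w$ to $R_v$ where $w$ is a current bag-mate, i.e.\ a vertex of type-ii that survives into the join bag. After that, the row operations you performed in one child subtree \emph{do} act nontrivially on the other child's slice of the matrix (since $w$'s row is nonzero there), so the partial results can no longer be simply added at a join node, and the ``routine induction on the processed subtree'' you invoke does not go through. You would either need to abandon bottom-up DP in favor of a strictly sequential (single running active-bag) traversal in which the state after processing one child is handed, not merged, to the other --- a framework your write-up gestures at but does not actually define or justify --- or you would have to rediscover the paper's deferral mechanism. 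The paper's choice of the buffer is precisely what keeps the sources of every row operation out of the bag, and this is the idea your proposal is missing.

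Two smaller remarks. First, neither of your two stated fixes always yields a nonzero pivot: $R_v \leftarrow R_v + R_w$ fails when $a_{ww} = -2a_{vw}$, and the paper's $2{\times}2$ identity applied when $a_{ww} \neq 0$ can also leave $a_{vv}=0$; you would need to argue existence of a suitable scalar, which also silently assumes the field is not too small and is not of characteristic $2$ (a limitation shared with the paper). Second, you do not engage with the fact that the paper's \texttt{JoinBox} is the genuinely expensive step --- maintaining the buffer in row echelon form at joins is $\Theta(k^3)$ per join naively, and the $O(k^2 n)$ bound requires the vertex-relabelling and amortized accounting of Section~\ref{sec:analysis}. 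If your approach could be made correct without a buffer it would sidestep that analysis entirely, which is part of why the join is not a place one can wave at as routine.
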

As it turns out, a pre-processing step of the algorithm turns the given tree decomposition into a \emph{nice tree decomposition}, in the sense of Kloks~\cite{Kloks1994}. Roughly speaking, the algorithm then tries to follow a pivoting scheme induced by the tree decomposition. To deal with accidental cancellations, we do not change the scheme, but rather create a temporary buffer for rows and columns for which the diagonal entry that was supposed to be a pivot happens to be zero. We ensure that the temporary buffer remains small throughout the algorithm by performing some additional operations. It should also be mentioned that the fact that we are allowed to perform both row and column operations in a symmetric matrix is crucial for our approach, and it is not clear how to handle matrices that are not symmetric or how to adapt this approach to Gaussian elimination.

To conclude the introduction, we mention an application of our work to spectral graph theory. This is a branch of graph theory that aims to extract structural information about graphs from algebraic information about different types of matrices associated with them, such as the adjacency, the distance and the Laplacian matrices, to name a few. These examples, like many others, are real-valued symmetric matrices. When $\mathbb{F}$ is a subfield of the set $\mathbb{R}$ of real numbers, determining a diagonal matrix that is congruent to a symmetric matrix $M$ is also quite useful for determining the number of eigenvalues of $M$ in any given real interval, as a consequence of Sylvester's Law of Inertia~\cite[p. 568]{Meyer2000}. In this respect, the current paper may be viewed as a contribution in the line of research of \emph{eigenvalue location} for matrices associated with graphs. We say that an algorithm \emph{locates eigenvalues for a class $\mathcal{C}$} (with respect to a type of matrix $M(G)$ associated with graphs $G$ in a certain class) if, for any graph $G \in \mathcal{C}$ and any given real interval $I$, it finds the number of eigenvalues of $M(G)$ in the interval $I$. Regarding the adjacency matrix, in recent years, linear time algorithms have been developed for eigenvalue location in trees~\cite{JT2011}, threshold graphs~\cite{JTT2013}, chain graphs~\cite{MR3506491} and cographs~\cite{JacobsTT2015}, among others. An application of the algorithm in this paper is a linear time eigenvalue location algorithm for any symmetric matrix $M=(m_{ij})$ whose underlying graph has bounded treewidth. Recently, Jacobs and the current authors~\cite{FHJT2019} obtained an eigenvalue location algorithm for certain matrices associated with graphs of bounded clique-width (the \emph{clique-width} is defined based on a structural decomposition introduced by Courcelle and Olariu \cite{CourcelleO2000}).  In addition to dealing with a different decomposition, the results in the current paper are much stronger, as they hold for any symmetric matrix, while the algorithm ~\cite{FHJT2019} applies only to symmetric matrices such that all nonzero off-diagonal entries have the same value. In the restricted setting of spectral graph theory, this means that the results in~\cite{FHJT2019} apply to the adjacency, Laplacian and signless Laplacian matrices associated with a graph, for instance, but not to the normalized Laplacian or to distance-based matrices, while our algorithm applies to all of them. In fact, an $n \times n$ symmetric matrix $M=(m_{ij})$  may be viewed as a weighted graph with vertex set $[n]$ where each vertex $i$ has weight $m_{ii}$ and each pair $\{i,j\}$, where $i \neq j$, is an edge if and only if $m_{ij} \neq 0$, in which case $m_{ij}$ is the weight of the edge. On this understanding, our paper applies to weighted graphs.

The paper is organized as follows. In Section~\ref{sec:preliminaries}, we define the tree decompositions that will be used in this paper and describe the properties that will be useful for our purposes. The algorithm is described in Section~\ref{sec:algorithm}, and we illustrate it with an example in Section~\ref{sec:example}. To conclude the paper, we prove the correctness of the algorithm in Section~\ref{sec:proof} and analyse its running time in Section~\ref{sec:analysis}. Together, these two sections establish Theorem~\ref{main_theorem}.

\section{Tree decompositions}\label{sec:preliminaries}

Let $G=(V,E)$ be an $n$-vertex graph, with the standard assumption that $V=[n]$. A \emph{tree decomposition} of a graph $G$ is a tree $\mathcal{T}$ with \emph{nodes} $1,\ldots,m$, where each node $i$ is associated with a \emph{bag} $B_i \subseteq V$, satisfying the following properties:
\begin{enumerate}
\item $\bigcup_{i=1}^m B_i=V$.
\item For every edge $\{v,w\} \in E$, there exists $B_i$ containing $v$ and $w$.
\item For any $v \in V$, the subgraph of $\mathcal{T}$ induced by the nodes that contain $v$ is connected.
\end{enumerate}
The \emph{width} of the tree decomposition $\mathcal{T}$ is defined as $\max_i \left(|B_i|-1\right)$ and the \emph{treewidth} $\twop(G)$ of graph $G$ is the smallest $k$ such that $G$ has a tree decomposition of width $k$. Clearly, it is always the case that $\twop(G) \leq |V|-1$; moreover, for connected graphs, $\twop(G)=1$ if and only if $G$ is a tree on two or more vertices. It is well known that having small treewidth implies the sparsity of the graph. Indeed, any $n$-vertex graph $G=(V,E)$ such that $\twop(G) \leq k$ satisfies $|E|\leq kn-\binom{k+1}{2}$.

In this paper, we actually use the concept of \emph{nice tree decomposition}, introduced by Kloks~\cite{Kloks1994}, which is a rooted tree decomposition $\mathcal{T}$ of a graph $G$ such that all nodes are of one of the following types:
\begin{itemize}
\item[(a)] \textbf{(Leaf)} The node $i$ is a leaf of $\mathcal{T}$;

\item[(b)] \textbf{(Introduce)} The node $i$ introduces vertex $v$, that is, it has a single child $j$, $v \notin B_j$ and $B_i=B_j \cup \{v\}$.

\item[(c)] \textbf{(Forget)} The node $i$ forgets vertex $v$, that is, $i$ has a single child $j$, $v \notin B_i$ and $B_j=B_i \cup \{v\}$;

\item[(d)] \textbf{(Join)} The node $i$ is a join, that is, it has two children $B_j$ and $B_\ell$, where $B_i=B_j=B_\ell$.
\end{itemize}
For convenience, we further assume that the root has an empty bag. This is easy to accomplish, as any tree decomposition $\mathcal{T}$ that satisfies (a)-(d), but does not satisfy this additional requirement, may be turned into a nice tree decomposition with empty root by appending a path of length at most $k+1$ to the root of $\mathcal{T}$, where all nodes have type Forget. Having a root with an empty bag is convenient to ensure that each vertex of $G=(V,E)$ is associated with exactly one Forget node that forgets it. Moreover, it ensures that, for any $v \in V$, the node of $\mathcal{T}$ that is closest to the root among all $j$ such that $v \in B_j$ is the child of the node that forgets $v$. Figure~\ref{figure:ntc} depicts a graph and a nice tree decomposition associated with it.

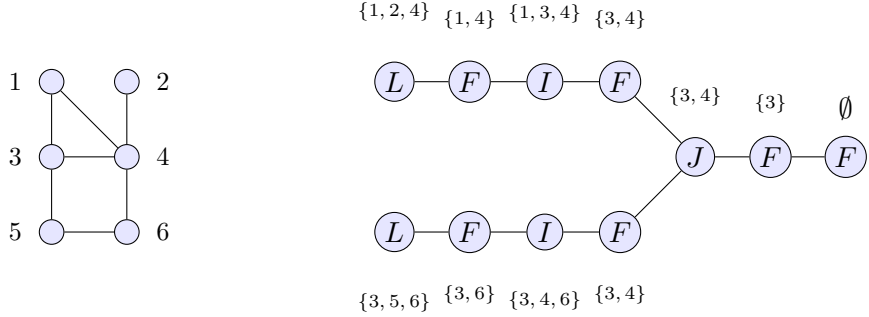
\begin{figure}
\begin{minipage}[l]{0.3 \linewidth}
\begin{tikzpicture}
[scale=1,auto=left,every node/.style={circle,scale=0.9}]
  \node[draw,circle,fill=blue!10,label=right:$2$] (a) at (0,1) {};
   \node[draw,fill=blue!10,circle,label=left:$1$] (b) at (-1,1) {};
 \node[draw,fill=blue!10,circle,label=right:$4$] (d) at (0,0) {};
 \node[draw,fill=blue!10, circle,label=left:$3$] (c) at (-1,0) {};
 \node[draw,fill=blue!10,circle,label=left:$5$] (e) at (-1,-1) {};
  \node[draw,fill=blue!10, circle,label=right:$6$] (f) at (0,-1) {};
  \path
          (a) edge node[above]{} (d)
          (c) edge node[above]{} (b)
          (c) edge node[above]{} (d)
          (c) edge node[above]{} (e)
          (d) edge node[above]{} (f)
          (d) edge node[above]{} (b)
          (e) edge node[above]{} (f);
\end{tikzpicture}
       \end{minipage}\hfill
       \begin{minipage}[l]{0.7 \linewidth}
\begin{tikzpicture}
  [scale=1,auto=left,every node/.style={circle,scale=1}]
  \node[draw,circle,fill=blue!10, inner sep=1.5,label=above:{\tiny{$\{1,2,4\}$}}] (a) at (-3,1) {$L$};
  \node[draw,circle,fill=blue!10, inner sep=1.5,label=above:{\tiny{$\{1,4\}$}}] (b) at (-2,1) {$F$};
  \node[draw,circle,fill=blue!10, inner sep=1.5,label=above:{\tiny{$\{1,3,4\}$}}] (c) at (-1,1) {$I$};
    \node[draw,circle,fill=blue!10, inner sep=1.5,label=below:{\tiny{$\{3,5,6\}$}}] (d) at (-3,-1) {$L$};
  \node[draw,circle,fill=blue!10, inner sep=1.5,label=below:{\tiny{$\{3,6\}$}}] (e) at (-2,-1) {$F$};
  \node[draw,circle,fill=blue!10, inner sep=1.5,label=below:{\tiny{$\{3,4,6\}$}}] (f) at (-1,-1) {$I$};
  \node[draw,circle,fill=blue!10, inner sep=1.5,label=above:{\tiny{$\{3,4\}$}}] (k) at (0,1) {$F$};
   \node[draw,circle,fill=blue!10, inner sep=1.5,label=below:{\tiny{$\{3,4\}$}}] (l) at (0,-1) {$F$};
   \node[draw,circle,fill=blue!10, inner sep=1.5,label=above:{\tiny{$\{3,4\}$}}] (g) at (1,0) {$J$};
   \node[draw,circle,fill=blue!10, inner sep=1.5,label=above:{\tiny{$\{3\}$}}] (h) at (2,0) {$F$};
     \node[draw,circle,fill=blue!10, inner sep=1.5,label=above:{$\emptyset$}] (i) at (3,0) {$F$};
  \path (a) edge node[below]{} (b)
        (b) edge node[below]{} (c)
        (d) edge node[below]{} (e)
        (e) edge node[below]{} (f)
        (c) edge node[below]{} (k)
        (f) edge node[below]{} (l)
        (h) edge node[below]{} (g)
        (h) edge node[below]{} (i)
        (g) edge node[below]{} (k)
        (g) edge node[below]{} (l);
\end{tikzpicture}
\end{minipage}
\caption{A graph and a nice tree decomposition associated with it. The root of the tree is on the right and the nodes are of type $L$ (leaf), $I$ (introduce), $F$ (forget) or join (J). We view the top child of the join node as its left child, and the bottom child as its right child.}
\label{figure:ntc}
\end{figure}

Despite the additional structure, a nice tree decomposition may be efficiently derived from an arbitrary tree decomposition. More precisely, Kloks~\cite[Lemma~13.1.2]{Kloks1994} has shown that, if $G$ is a graph of order $n$ and we are given an arbitrary tree decomposition of $G$ with width $k$ and $m$ nodes, it is possible to turn it into a nice tree decomposition of $G$ with at most $4n$ nodes and width at most $k$ in time $O(k(m+n))$. This also holds when the root is required to have an empty bag. For completeness, we state a more precise version as a lemma. A proof may be found in the appendix.
\begin{lemma}\label{lemma_nice:tree}
Let $G$ be a graph of order $n$ with a tree decomposition $\mathcal{T}$ of width $k$ with $m$ nodes. Then $G$ has a nice tree decomposition of the same width $k$ with fewer than $4n-1$ nodes that can be computed from $\mathcal{T}$ in time $O(k(m+n))$.
\end{lemma}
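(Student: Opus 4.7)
The plan is to build the nice tree decomposition in four successive stages from $\mathcal{T}$, and then separately argue the node count and the running time.

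Stage~1 (cleanup): root $\mathcal{T}$ at any node and exhaustively contract every edge $(u,v)$ whose two bags are nested, replacing the pair by the node with the larger bag. Each contraction trivially preserves the three tree-decomposition axioms and the width. Stage~2 (binarize): every node with $d\ge 3$ children $c_1,\ldots,c_d$ is replaced by a caterpillar of $d-1$ Join nodes carrying the bag $B_u$, whose bottom leaves are the $c_i$. Stage~3 (edge chains): for each remaining tree edge $(u,w)$ with $u$ the parent and $B_u\ne B_w$, insert a chain that first Forgets the vertices of $B_w\setminus B_u$ one by one and then Introduces the vertices of $B_u\setminus B_w$ one by one; every intermediate bag lies inside $B_u\cup B_w$, so the width stays $\le k$. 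Stage~4 (empty root): prepend a chain of Forget nodes above the root until the new topmost bag is empty.

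For the size bound, the key observation after Stage~1 is that every non-root node $u$ carries a witness vertex $f(u)\in B_u\setminus B_{\mathrm{parent}(u)}$, and the connectedness axiom forces $f$ to be injective into $V(G)$: if two distinct nodes shared a witness $w$, both would be topmost nodes of the connected subtree $T_w=\{x:w\in B_x\}$, contradicting its uniqueness. Hence the cleaned tree has at most $n+1$ nodes and so at most $n$ leaves. After all four stages, every vertex of $G$ is Forgotten exactly once (at the edge emerging from the top of $T_v$), giving exactly $n$ Forget nodes. A dual injective assignment mapping each Introduce node of a vertex $v$ to a distinct ``entry point'' of $T_v$ from below bounds the number of Introduce nodes by $n-1$. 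Since the final tree is binary, the number of Join nodes equals (number of Leaves) $-1 \le n-1$, while the number of Leaves is at most $n$. Summing yields at most $n+(n-1)+n+(n-1)=4n-2<4n-1$ nodes.

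For the running time, Stage~1 visits each of the $m$ input nodes once and performs bag comparisons of cost $O(k)$ using sorted vertex lists or bit-vectors indexed by vertex, for a total of $O(k(m+n))$. Stages~2--4 operate on a tree of $O(n)$ nodes and do $O(k)$ work per node, adding $O(kn)$. The main technical obstacle I expect is the accounting in Stage~3 --- in particular, showing that the Introduce chains contribute only $O(n)$ new nodes rather than the naive $O(km)$ one might fear; this crucially uses the $\le n+1$ bound on the cleaned tree from Stage~1 together with the injective witness argument for vertices entering each $T_v$. All other ingredients --- correctness of each local transformation, preservation of width, and routine data-structure bookkeeping --- are essentially straightforward.
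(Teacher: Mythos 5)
Your four stages correspond closely to the paper's first, third, fourth, and root-closing traversals, but you have omitted the paper's \emph{second} traversal, in which each child bag is padded with vertices from its parent's bag until the child's bag is at least as large as the parent's. That padding is not cosmetic: it is exactly what makes the node count come out to $O(n)$. After padding, every edge $(u,w)$ of the cleaned tree (with $u$ the parent) satisfies $|B_u\setminus B_w|\le|B_w\setminus B_u|$, so the chain for each edge contributes at least as many Forget nodes as Introduce nodes, giving $m'_I\le m'_F=n$. (The paper implements this via the alternating Forget/Introduce chain, whose side effect is that every Introduce node's unique child is a Forget node, yielding an injection.)

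Without padding, your bound $m'_I\le n-1$ is false, and the ``dual injective assignment to entry points of $T_v$ from below'' does not exist: unlike the top of $T_v$, the bottom of $T_v$ is not a single node but a set of boundary edges, and a vertex can be Introduced once per such edge. Concretely, take $\mathcal{T}_1$ to be a caterpillar with spine $u_1,\dots,u_{L-1}$ (rooted at $u_1$), where $u_i$ has a pendant leaf $L_i$ for $i\le L-2$ and $u_{L-1}$ has two pendant leaves. Put $B_{u_i}=\{a_i,s_1,\dots,s_k\}$ with the $a_i$ distinct and $B_{L_j}=\{b_j\}$ a fresh singleton. No bag is contained in its parent's, so Stage~1 contracts nothing, and $n=(L-1)+k+L=2L+k-1$. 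Each of the $L$ spine-to-leaf edges, under your Stage~3, contributes $|B_{u_i}\setminus B_{L_j}|=k+1$ Introduce nodes, so $m'_I\ge L(k+1)=\Omega(kn)$, and the final decomposition has $\Theta(kn)$ nodes, exceeding $4n-1$ (already for $L=10$, $k=5$). This also breaks the $O(k(m+n))$ time bound, since the output itself has superlinear size. After the paper's padding step the same example gives $B_{L_j}=\{b_j,s_1,\dots,s_k\}$ and only one Introduce per spine-to-leaf edge, and the bound holds. So the missing padding traversal is a genuine gap, not just an omitted routine step.

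As a secondary, fixable point: your Stage~2 replaces a node with $d$ children by $d-1$ Join nodes whose bottom node still has two original children $c_{d-1},c_d$ with bags generally different from $B_u$; that node is then not a legal Join node. The paper instead uses a binary tree with $d$ leaves (so $2d-1$ copies of the bag $B_u$), and the $d$ leaves become the tops of the Stage~3 chains; you should do the same so every node with two children really has both children carrying the same bag.
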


\section{The Algorithm}
\label{sec:algorithm}
We now describe our diagonalization algorithm, which we call \texttt{CongruentDiagonal}. Let $M$ be a symmetric matrix of order $n$ and let $G = (V,E)$ be the underlying graph with vertex set $V=[n]$ associated with it.  We wish to find a diagonal matrix congruent to $M$ in linear time, so that we shall operate on a sparse representation of the graph associated with $M$, rather than on the matrix itself. Let $\mathcal{T}$ be a nice tree decomposition of $G$ with node set $[m]$ and width $k$. The algorithm \texttt{CongruentDiagonal} works bottom-up in the rooted tree $\mathcal{T}$, that is, it only processes a node $i$ after its children have been processed. Each node $i$ other than the root produces a data structure known as a \emph{box} and transmits it to its parent. A box is a symmetric matrix $N_i$ of order at most $2(k+1)$ that may be recorded as pair of matrices $(N_i^{(1)},N_i^{(2)})$ whose rows and columns are labeled by elements of $[n]$. These labels relate rows and columns of the boxes with rows and columns of the original matrix. While producing the box, the algorithm may also produce diagonal elements of a matrix congruent to $M$. These diagonal elements are not transmitted to the node's parent, but are appended to a global array. At the end of the algorithm, the array consists of the diagonal elements of a diagonal matrix $D$ that is congruent to $M$. Figure~\ref{fig:highlevel} shows a high-level description of the algorithm.

To get an intuition about how the algorithm works, assume that we are trying to perform Gaussian elimination, but performing the \emph{same} column operation after each row operation to produce a matrix that is congruent to the original matrix. The pivoting scheme that we choose is based on the tree decomposition $\mathcal{T}$ as follows. Recall that each row of $M$ is associated with a vertex $v$ of $G$, which in turn is associated with the unique Forget node of $\mathcal{T}$ that forgets $v$. Given a bottom-up ordering of $\mathcal{T}$, the first row in the pivoting scheme is the row associated with the first Forget node in this ordering, the second row is the one associated with the second Forget node, and so on. The algorithm attempts to diagonalize row $v$ at the node where $v$ is forgotten. If the diagonal entry $d_v$ corresponding to $v$ in the box is nonzero, the algorithm uses it to annihilate nonzero elements in this row/column, and $(v,d_v)$ is appended to the global array mentioned above. It turns out that, if the same row and columns operations had been performed in the entire matrix, this would lead to the row and column associated with $v$ being diagonalized with diagonal entry $d_v$. However, if $d_v=0$, the original entry $m_{vv}$ was 0 or an accidental cancellation has occurred. Then the row and column associated with $v$ will be added to a temporary buffer, and it will later be diagonalized in the process of producing new boxes.

\begin{figure}[h]
{\tt
\begin{tabbing}
aaa\=aaa\=aaa\=aaa\=aaa\=aaa\=aaa\=aaa\= \kill
     \> \texttt{CongruentDiagonal}(M)\\
     \> {\bf input}:  a nice tree decomposition $\mathcal{T}$ of the underlying  \\
     \> graph $G$ associated with $M$ of width $k$ and the nonzero entries of $M$ \\
     \> {\bf output}: diagonal entries in $D \cong M$ \\
     \> Order the nodes of $\mathcal{T}$ as $1,2,\ldots,m$ in post order\\
     \>{\bf for} $i$ {\bf from } 1 \bf {to} $m$ {\bf do} \\
     \> \> {\bf if} is-Leaf($B_i$) {\bf then} $(N_i^{(1)},N_i^{(2)})$=\texttt{LeafBox}($B_i$)\\
     \> \> {\bf if} is-Introduce($B_i$)  {\bf then} $(N_i^{(1)},N_i^{(2)})$=\texttt{IntroBox}($B_i$) \\
     \> \> {\bf if} is-Join($B_i$)  {\bf then} $(N_i^{(1)},N_i^{(2)})$=\texttt{JoinBox}($B_i$) \\
    \> \> {\bf if} is-Forget($B_i$)  {\bf then} $(N_i^{(1)},N_i^{(2)})$=\texttt{ForgetBox}($B_i$) \\
\end{tabbing}}
\caption{High level description of the algorithm \texttt{CongruentDiagonal}.}\label{fig:highlevel}
\end{figure}

In the remainder of this section, we shall describe each operation in detail. To help us do this, we first give a more detailed description of the boxes produced at each node of the tree. To this end, we first say what we mean by a (not necessarily symmetric) matrix  $A=(a_{ij})$ being in \emph{row echelon form}. This means that
\begin{itemize}
\item[(i)] $a_{ij}=0$ for all $j<i$;

\item[(ii)] The \emph{pivot} of row $i$ is the element $a_{ij} \neq 0$ with least $j$, if it exists. If rows $i_1<i_2$ have pivots in columns $j_1$ and $j_2$, respectively, then  $j_1 <j_2$.
\end{itemize}
Each box $N_i$ produced by a node $i$ has the form
\begin{equation}
\label{eq:formofM}
N_i = \begin{array}{|c|cc|}\hline
& & \\
N_i^{(0)} & \hspace*{3mm} & N_i^{(1)} \hspace*{5mm}\\
& & \\\hline

 & & \\

N_i^{(1)T} & \hspace*{3mm} & N_i^{(2)} \hspace*{5mm}\\
& & \\\hline
\end{array},
\end{equation}
where $N_i^{(0)}$ is a matrix of dimension $k_i' \times k_i'$ whose entries are zero, $N_i^{(2)}$ is a symmetric matrix of dimension $k_i'' \times k''_i$ and $N_i^{(1)}$ is a $k_i' \times k_i''$ matrix in row echelon form such that every row has a pivot. Moreover, $0 \leq k_i' \leq k_i''=|B_i| \leq k+1$. We allow $k'_i$ or  $k_i''$ to be zero, in which case we regard $N_i^{(0)}$, $N_i^{(1)}$ or $N_i^{(2)}$ as empty. As mentioned above, each row of $N_i$ (and the corresponding column) is associated with a vertex of $G$ (equivalently, a row of $M$). Let $V(N_i)$ denote the set of vertices of $G$ associated with the rows of $N_i$. Following the terminology in~\cite{FHJT2018}, we say that the $k_i'$ rows in $N_i^{(0)}$ (or $N_i^{(1)}$) have {\em \typei} and the $k_i''$ rows of $N_i^{(2)}$ have {\em \typeii}. This is represented by the partition $V(N_i)=V_1(N_i) \cup V_2(N_i)$, where $V_1(N_i)$ and $V_2(N_i)$ are the vertices of \typei{} and \typeii{}, respectively. As it turns out, the vertices of \typeii{} are precisely the vertices in $B_i$, while the vertices in $V_1(N_i)$ correspond to the rows that have been added to a ``temporary buffer'' due to an accidental cancellation and have yet to be diagonalized.

For future reference, the structure of the box $N_i$ is described in the following lemma. It will be established as part of the proof of correctness of the algorithm.
\begin{lemma}\label{lemma_Ni}
For all $i \in [m]$, the matrix $N_i$ defined in terms of the pair $(N_i^{(1)},N_i^{(2)})$ produced by node $B_i$ satisfies the following properties:
\begin{itemize}
\item[(a)] $0 \leq k_i' \leq k_i'' \leq k+1$.

\item[(b)]  $N_i^{(1)}$ is a matrix in row echelon form such that every row has a pivot.

\item[(c)] $N_i^{(2)}$ is a symmetric matrix such that $V_2(N_i)=B_{i}$ and whose rows and columns are sorted in increasing order.
\end{itemize}
\end{lemma}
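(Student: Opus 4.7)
The plan is to prove Lemma~\ref{lemma_Ni} by induction on the post-order position of $i$ in $\mathcal{T}$, treating the four subroutines that produce a box (\texttt{LeafBox}, \texttt{IntroBox}, \texttt{ForgetBox}, \texttt{JoinBox}) as the inductive cases. For a Leaf node, the box is initialised with $k_i'=0$ (so $N_i^{(1)}$ is vacuously in row echelon form) and $N_i^{(2)}$ equal to the principal submatrix of $M$ indexed by $B_i$ in increasing order; invariants (a)--(c) then hold trivially, since $k_i''=|B_i|\le k+1$. At an Introduce node with child $j$ introducing vertex $v$, the induction supplies a box for $j$ satisfying (a)--(c), and $N_i$ is obtained by inserting a single new row and column for $v$ into $N_j^{(2)}$ at its sorted position, populated from the nonzero entries of $M$ incident to $v$ within $B_i$. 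The type-i block is unchanged, $V_2(N_i)=B_j\cup\{v\}=B_i$, and $k_i''=k_j''+1\le k+1$, so (a)--(c) persist.

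At a Forget node with child $j$ that forgets vertex $v$, the inductive hypothesis gives $v\in V_2(N_j)$. If $(N_j^{(2)})_{vv}\neq 0$, a symmetric Gaussian elimination using that pivot annihilates the remainder of row and column $v$; the pair $(v,(N_j^{(2)})_{vv})$ is appended to the global diagonal output, and row/column $v$ are dropped from the box. Because the column indexed by $v$ in $N_j^{(1)}$ may contain a pivot, we must check that the symmetric elimination does not destroy the row echelon form of $N_i^{(1)}$; the key point is that the operation is carried out inside the $V_2$-block and only modifies entries in columns indexed by $V_2\setminus\{v\}$. If instead $(N_j^{(2)})_{vv}=0$, the row for $v$ is transferred from type-ii to type-i by appending it to the bottom of $N_j^{(1)}$, then reduced against the existing pivots of $N_j^{(1)}$ (with a permutation, if needed) to restore row echelon form. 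In either subcase $V_2(N_i)=B_j\setminus\{v\}=B_i$, while $k_i'\le k_i''$ follows from the general fact that a matrix in row echelon form all of whose rows have pivots cannot have more rows than columns.

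The main obstacle is the Join case. Here the children $j,\ell$ satisfy $B_j=B_\ell=B_i$, so $V_2(N_j)=V_2(N_\ell)=B_i$, but $V_1(N_j)$ and $V_1(N_\ell)$ are independent buffers of still-undiagonalized rows. The routine must produce a box whose type-ii block merges the two children's $N^{(2)}$'s (correcting for the double contribution of the principal submatrix of $M$ indexed by $B_i$) and whose type-i block combines their $N^{(1)}$'s. A naive concatenation would have $k_j'+k_\ell'$ rows, which could exceed $k_i''=|B_i|\le k+1$, so invariant (a) is nontrivial. The core of the proof is to show that the elimination carried out by \texttt{JoinBox} on the concatenated type-i block reduces it to a matrix in row echelon form whose rows all retain pivots, discarding any rows that become identically zero; since $N_i^{(1)}$ has only $|B_i|$ columns and the pivots occupy distinct columns, the surviving number of rows is automatically at most $k_i''$, yielding $k_i'\le k_i''\le k+1$. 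Verifying this formally, along with checking that the merging preserves the symmetry of $N_i^{(2)}$ and the sorted order of the rows in $B_i$, is the delicate combinatorial step; everything else is bookkeeping over the definitions of the four subroutines.
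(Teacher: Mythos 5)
Your proposal has the right overall shape (induction on the post-order position, one case per node type, and the observation that pivots occupying distinct columns forces $k_i'\le k_i''$ at a Join), and this matches the paper's treatment, which also proves this lemma by checking each subroutine in the same induction that establishes the later correctness lemmas. However, there are two concrete problems.

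First, you misdescribe \texttt{LeafBox} and \texttt{IntroduceBox}. The algorithm initialises $N_i^{(2)}$ to the \emph{zero} matrix at a leaf, and adds a zero row/column at an Introduce node; the entries $m_{uv}$ are injected only in \texttt{ForgetBox}, precisely to avoid double-counting at Join nodes. Your version, which inserts the principal submatrix of $M$ up front and then ``corrects for the double contribution'' at a Join, changes the algorithm being analysed; this is salvageable for the purely structural claims (a)--(c), but it is not the algorithm the lemma is about, and you would still owe a correctness argument for the subtraction you hand-wave.

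Second, and more seriously, the Forget case is wrong whenever $\mathbf{x}_v\neq 0$, i.e.\ when the forgotten vertex $v$ has a nonzero entry against some type-i row $u$. You branch only on whether $(N_j^{(2)})_{vv}$ is zero, and in the nonzero subcase you claim that symmetric Gaussian elimination on $v$ ``only modifies entries in columns indexed by $V_2\setminus\{v\}$.'' That is false: the row operation $R_u\leftarrow R_u-(\alpha/d_v)R_v$ (with $\alpha=\mathbf{x}_v[u]\neq 0$) writes nonzero values into the $N_i^{(0)}$ block (which must remain zero) and perturbs the pivots of $N_i^{(1)}$, destroying row echelon form. Similarly, if $(N_j^{(2)})_{vv}=0$ but $\mathbf{x}_v\neq 0$, one cannot simply ``append $v$ to the bottom of $N_j^{(1)}$'' because the row already has nonzero entries in $N_i^{(0)}$-positions. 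The paper's Case~2 of \texttt{ForgetBox} (equations~\eqref{eq_case21}--\eqref{eq_case23}) is designed precisely for this situation: it first uses the \emph{rightmost} nonzero entry of $\mathbf{x}_v$ to clear the rest of $\mathbf{x}_v$ without disturbing the echelon pivots, clears $d_v$, performs the $2\times 2$ congruence trick, and then diagonalises both $u$ and $v$, removing two rows from the box. Without this case, invariants (b) (row echelon form with all rows having pivots) and the zero block structure in~\eqref{eq:formofM} simply fail, so the induction does not go through.
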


We next describe each step of algorithm \texttt{CongruentDiagonal} in detail.

\vspace{10pt}

\noindent \textbf{LeafBox.} When the node is a leaf corresponding to a bag $B_i$ of size $b_i$, then we apply procedure \texttt{LeafBox} of Figure~\ref{fig:leafbox}. This procedure initializes a box $N_i$ where $k'=0$ and $k''=b_i$, where $N_i^{(2)}$ is the zero matrix of dimension $k'' \times k''$ whose rows and columns are labeled by the elements of $B_i$.
\begin{figure}[h]
{\tt
\begin{tabbing}
aaa\=aaa\=aaa\=aaa\=aaa\=aaa\=aaa\=aaa\= \kill
     \> \texttt{LeafBox}($B_i$)\\
     \> {\bf input}: bag $B_i$ of size $b_i$ \\
     \> {\bf output}: a matrix $N_i=(N_i^{(1)},N_i^{(2)})$ \\
     \> \> Set $N_i^{(1)} =\emptyset$ \\
     \> \> $N_i^{(2)}$ is a zero matrix of dimension $b_i \times b_i$.\\
\end{tabbing}}
\caption{Procedure \texttt{LeafBox}.}\label{fig:leafbox}
\end{figure}

One might be tempted to put $m_{uv}$ at position $u,v$ of $N_i^{(2)}$. However, it is better to add $m_{uv}$ in the step associated with forgetting $u$ or $v$ in order to keep the Join operation simple.

\vspace{10pt}

\noindent \textbf{IntroduceBox.}  When the node $i$ introduces vertex $v$, its bag satisfies $B_i=B_j \cup \{v\}$, where
$B_j$ (which does not contain $v$) is the bag of its child $j$. The input of  \texttt{IntroduceBox} is the vertex $v$ that has been introduced and the box $N_j$ induced by the pair $(N_j^{(1)},N_j^{(2)})$ transmitted by its child, and the box $N_i$ simply adds a new type-ii row/column labeled by $v$ whose elements are all zero, taking care to insert it in the correct order with respect to the labels of type-ii vertices, that is, the labels of the rows of $N_i^{(2)}$ are in increasing order. This is described in Figure~\ref{fig:introbox}.
\begin{figure}[h]\label{fig:introbox}
{\tt
\begin{tabbing}
aaa\=aaa\=aaa\=aaa\=aaa\=aaa\=aaa\=aaa\= \kill
     \> \texttt{IntroduceBox}($B_i,v,N_j$)\\
     \> {\bf input}:  bag $B_i$, vertex $v$ and bag $N_j=(N_j^{(1)},N_j^{(2)})$ produced by its child \\
     \> {\bf output}: a matrix $N_i=(N_i^{(1)},N_i^{(2)})$ \\
     \> \> $N_i^{(1)}=N_j^{(1)}$, $N_i^{(2)} =N_j^{(2)}$\\
     \> \> add zero row and zero column associated with $v$ to $N_i^{(2)}$\\
     \>\>\> making sure that type-ii rows remain in increasing order\\
     \> \> if $N_i^{(1)}$ is nonempty, add zero column to $N_i^{(1)}$\\
     \>\>\> preserving the column order of $N_i^{(2)}$\\
\end{tabbing}}
\caption{Procedure \texttt{IntroduceBox}.}\label{fig:introbox}
\end{figure}

\vspace{10pt}

\noindent \textbf{JoinBox.} Let $i$ be a node of type join and let $N_j$ and $N_\ell$ be the boxes induced by the pairs transmitted by its children, where $j<\ell<i$. By Lemma~\ref{lemma_Ni}(c) and the definition of the join operation, we have $V_2(N_j) = V_2(N_\ell)$. Moreover, when proving the correctness of the algorithm, we will establish that $V_1(N_j) \cap V_1(N_\ell)=\emptyset$. The \texttt{JoinBox} operation first creates a matrix $N_i^\ast$ whose rows and columns are labeled by $V_1(N_j) \cup V_1(N_\ell) \cup V_2(N_j)$ with the structure below. Assume that $|V_1(N_j)|=r$, $|V_1(N_\ell)|=s$ and $|B_i|=t$, and define
\begin{equation}
\label{eq:formofNast}
N_i^\ast = \begin{array}{|c|c|c|}
\hline
\mathbf{0}_{r \times r} & \mathbf{0}_{r \times s} & N_j^{(1)} \\ \hline
 \mathbf{0}_{s \times r} &  \mathbf{0}_{s \times s}& N_\ell^{(1)} \\ \hline
 N_j^{(1)T} & N_\ell^{(1)T} & N_i^{\ast(2)}\\ \hline
\end{array},
\end{equation}
where $N_i^{\ast(2)}=N_j^{(2)}+N_\ell^{(2)}$. The matrices $N_j^{(1)}$ and $N_\ell^{(2)}$ simply appear on top of each other because we made sure that its columns have the same labeling, which is why we require the labels of type-ii rows to be in increasing order. Let $N_i^{\ast(0)}$ denote the zero matrix of dimension $(r+s) \times (r+s)$ on the top left corner. Note that the matrix
$$N_i^{\ast(1)}=\begin{array}{|c|}
\hline
N_j^{(1)} \\ \hline
N_\ell^{(1)} \\ \hline
\end{array}$$
is an $(r+s)\times t$ matrix consisting of two matrices in row echelon form on top of each other.

To obtain $N_i$ from $N_i^\ast$, we perform row operations on $N_i^{\ast(1)}$. To preserve congruence, each row operation must be followed  by the corresponding column operation in $N_i^{\ast(1) T}$, but of course we need not actually perform the opereation. The goal is to insert the rows labeled by $V_1(N_\ell)$ (the \emph{right rows}) into the matrix $N_j^{(1)}$ labeled by the \emph{left rows} to produce a single matrix in row echelon form. We do this as follows. While there is a pivot $\alpha_c$ of a right row $w$ that lies in the same column $c$ as the pivot $\beta_c$ of $v$, where $v$ is a left row or a different right row, we use $v$ to eliminate the pivot of $w$ by performing the operations
\begin{equation}\label{eq_join}
R_w \leftarrow R_w-\frac{\alpha_c}{\beta_c} R_v, C_w \leftarrow C_w-\frac{\alpha_c}{\beta_c} C_v.
\end{equation}
After this loop has ended, the matrix $N_i^{(1)}$ is obtained from $N_j^{(1)}$ by simply inserting any right rows that still have a pivot in their proper position, in the sense that the final matrix $N_i^{(1)}$ is in row echelon form. Each row and column keeps its label throughout. If $Z_i$ denotes the set of labels of the right rows that became zero vectors while performing the above calculations, the algorithm appends diagonal elements $(v,0)$ for all $v\in Z_i$ to the global array and does not add their rows to $N_i^{(1)}$. This produces the box
\begin{equation}
\label{eq:formofN}
N_i = \begin{array}{|c|c|c|}
\hline
\mathbf{0}_{k' \times k'} & N_i^{(1)} \\ \hline
 N_i^{(1)T} &  N_i^{(2)}\\ \hline
\end{array},
\end{equation}
where $k'=r+s-|Z_i|$, $k''=t$ and $N_i^{(1)}$ is a matrix of dimension $k' \times k''$ in row echelon form and $N_i^{(2)}=N_i^{\ast(2)}$. Since every row of $N_i^{(1)}$ has a pivot, we have $k' \leq k''$.

\begin{figure}[h]
{\tt
\begin{tabbing}
aaa\=aaa\=aaa\=aaa\=aaa\=aaa\=aaa\=aaa\= \kill
     \> \texttt{JoinBox}($B_i,N_j,N_\ell$)\\
     \> {\bf input}: $B_i$ and boxes $N_j,N_\ell$ produced by the two children of $i$ \\
     \> {\bf output}: a box $N_i=(N_i^{(1)},N_i^{(2)})$ and a set of diagonal elements \\
     \> \> define $N_i^{*}$ as in~\eqref{eq:formofNast} \\
     \> \> do row operations on $N_i^{\ast(1)}$ as in~\eqref{eq_join}  to achieve row echelon form\\
     \> \> for each zero row of $N_i^{\ast(1)}$ (labeled by a vertex $u$), output $(u,0)$\\
     \> \> $N_i^{(1)}$ is $N_i^{*(1)}$ with zero rows removed as in~\eqref{eq:formofN}\\
\end{tabbing}}
\caption{Procedure \texttt{JoinBox}.}\label{fig:join}
\end{figure}

\vspace{10pt}

\noindent \textbf{ForgetBox.}  Assume that $i$ forgets vertex $v$ and let $j$ be its
child, so that $B_i=B_j \setminus \{v\}$. This procedure starts with $N_j$ and produces a new matrix $N_i$ where the row associated with $v$ becomes of type-i or is diagonalized. First it defines a new matrix $N_i^\ast$ from the box $N_j$, where, for all $u \in B_j$,
the entries $uv$ and $vu$ are replaced by $N_j^{(2)}[u,v]+m_{uv}$, while the other entries remain unchanged.

\begin{figure}[h]
{\tt
\begin{tabbing}
aaa\=aaa\=aaa\=aaa\=aaa\=aaa\=aaa\=aaa\= \kill
     \> \texttt{ForgetBox}($B_i,v,N_j$)\\
     \> {\bf input}: $B_i$, a vertex $v$ and matrix $N_j$ transmitted by the child of $i$ \\
     \> {\bf output}: a box $N_i=(N_i^{(1)},N_i^{(2)})$ and a set of diagonal elements \\
     \> \> $N_i^{*(1)}=N_j^{(1)}$,~$N_i^{*(2)}[u,w]=N_j^{(2)}[u,w]$, for all $u,w \in B_j$ with $v \notin\{u,w\}$\\
      \> \>$N_i^{*(2)}[u,v]=N_i^{*(2)}[v,u]=N_j^{(2)}[u,v]+m_{uv}$, for all $u \in B_j$\\
     \> \> Perform row/column exchange so that $N_i$ has the form of (\ref{eq:formofNast2})\\
     \> \> If $\mathbf{x}_v$ is empty or 0 then\\
     \> \> \> If $\mathbf{y}_v$ is empty or 0 then // \textit{Subcase 1(a)}\\
     \> \> \> \> append $(v, d_v)$ to the global array and remove row $v$ from $N_i$\\
     \> \> \> else //  Here $\{\mathbf{y}_v \not = 0\}$ \\
     \> \> \> \> If $d_v \not = 0 $ then // \textit{Subcase 1(b)} \\
     \> \> \> \>  \> use $d_v$ to diagonalize row/column $v$ as in~\eqref{eq:1b}\\
     \> \> \> \>  \> append $(v, d_v)$ to the global array and remove row $v$ from $N_i$\\
     \> \> \> \>  else // Here $d_v=0$ // \textit{Subcase 1(c)}  \\
     \> \> \> \> \> do row and column operations as in~\eqref{eq:1c} \\
     \> \> \> \>  \> If a zero row is obtained then\\
     \> \> \> \> \> \>  append $(v,0)$ to the global array and remove row $v$ from $N_i$\\
     \> \> \> \> \>  else insert row $v$ into $N_i^{(1)}$\\
     \> \> else // Here $\mathbf{x}_v \not = 0$  // \textit{Case 2} \\
     \> \> \> use  operations as in (\ref{eq_case21})-(\ref{eq_case23}) to diagonalize  rows
$u$ and $v$ \\
\> \> \> append $(v,d_v),(u, d_u)$ to the global array and remove rows $v,u$ from $N_i$.\\
\end{tabbing}}
\caption{Procedure \texttt{ForgetBox}.}\label{fig:forget}
\end{figure}

We observe that the row corresponding to $v$ is type-ii in the box $N_j$. Thus, after introducing the entries of $M$, $v$ is implicitly associated with a row in $N_i^{*(2)}$. For convenience, we exchange rows and columns to look at $N_i^\ast$ in the following way\footnote{This is helpful for visualizing the operations, but this step is not crucial in an implementation of this procedure.}:
\begin{equation}
\label{eq:formofNast2}
N_i^\ast = \begin{array}{|c|c|c|}
\hline
d_v & \mathbf{x}_v & \mathbf{y}_v \\ \hline
\mathbf{x}_v^T & \mathbf{0}_{k' \times k'} & N_i^{\ast(1)} \\ \hline
\mathbf{y}_v^T&  N_i^{\ast(1)T} &  N_i^{\ast(2)}\\ \hline
\end{array}.
\end{equation}
Here, the first row and column represent the row and column in $N_i^\ast$ associated with $v$, while $N_i^{\ast(1)}$ and $N_i^{\ast(2)}$ determine the entries $uw$ such that $u$ has type-i and $w \in B_i$, and such that $u,w \in B_i$, respectively. In particular $ \mathbf{x}_v$ and $ \mathbf{y}_v$ are row vectors of size $k'_j$ and $k''_j-1$, respectively.

Depending on $d_v$ and on the vectors $\mathbf{x}_v$ and $\mathbf{y}_v$, we proceed in different ways.

\vspace{5pt}

\noindent \emph{Case 1: $\mathbf{x}_v$ is empty or $\mathbf{x}_v=[0 \cdots 0]$.}

Case 1 happens when the box transmitted to node $i$ has no type-i rows and columns.

If $\mathbf{y}_v=[0 \cdots 0]$ (or $\mathbf{y}_v$ is empty), the row of $v$ is already diagonalized, we simply add $(v,d_v)$ to $D_i$ and remove the row and column associated with $v$ from $N_i^\ast$ to produce $N_i$. We refer to this as Subcase 1(a).

If $\mathbf{y}_v \neq [0 \cdots 0]$, there are again two options. In Subcase 1(b), we assume that $d_v \neq 0$ and we use $d_v$ to eliminate the nonzero entries in $y_v$ and diagonalize the row corresponding to $v$. For each element $u \in B_i$ such that the entry $\alpha_v$ of $y_v$ associated with $u$ is nonzero, we perform
\begin{equation}\label{eq:1b}
R_u \leftarrow R_u-\frac{\alpha_v}{d_v} R_v, C_u \leftarrow C_u-\frac{\alpha_v}{d_v} C_v.
\end{equation}
When all such entries have been eliminated, we append $(d_v,v)$  to the global array and we let $N_i$ be the remaining matrix. Observe that, in this case, $N_i^{(1)}=N_i^{\ast(1)}$, only the elements of $N_i^{\ast(2)}$ are modified to generate $N_i^{(2)}$.

If $\mathbf{y}_v \neq [0 \cdots 0]$ and $d_v=0$, we are in Subcase 1(c). The aim is to turn $v$ into a row of type-i. To do this, we need to insert $\mathbf{y}_v$ into the matrix  $N_i^{\ast(1)}$ in a way that the resulting matrix is in row echelon form. Note that this may be done by only adding multiples of rows of $V( N_i^{\ast(1)})$ to the row associated with $v$. At each step, if the pivot $\alpha_j$ of the (current) row associated with $v$ is in the same position as the pivot $\beta_j$ of $R_u$, the row associated with vertex $u$ already in $N_i^{\ast(1)}$, we use $R_u$ to eliminate the pivot of $R_v$:
\begin{equation}\label{eq:1c}
R_v \leftarrow R_v-\frac{\alpha_j}{\beta_j} R_u, C_v \leftarrow C_v-\frac{\alpha_j}{\beta_j} C_u.
\end{equation}
This is done until the pivot of the row associated with $v$ may not be cancelled by pivots of other rows, in which case the row associated with $v$ may be inserted in the matrix (to produce the matrix $N_i^{(1)}$), or until the row associated with $v$ becomes a zero row, in which case $(v,0)$ is added to $D_i$ and the row and column associated with $v$ are removed from $N_i^\ast$ to produce $N_i$.

\vspace{5pt}

\noindent \emph{Case 2: $\mathbf{x}_v$ is nonempty and $\mathbf{x}_v \neq [0 \cdots 0]$.}

Let $u$ be the vertex associated with the rightmost nonzero entry of $x_v$. Let $\alpha_j$ be this entry. We use this element to eliminate all the other nonzero entries in $x_v$, from right to left. Let $w$ be the vertex associated with an entry $\alpha_\ell \neq 0$. We perform
\begin{equation}\label{eq_case21}
R_w \leftarrow R_w - \frac{\alpha_\ell}{\alpha_j} R_u, \hspace{10pt} C_w \leftarrow C_w - \frac{\alpha_\ell}{\alpha_j} C_u.
\end{equation}
A crucial fact is that the choice of $u$ ensures that, even though these operations modify $N_i^{\ast(1)}$, the new matrix is still in row echelon form and has the same pivots as $N_i^{\ast(1)}$.
If $d_v \neq 0$, we still use $R_u$ to eliminate this element:
\begin{equation}\label{eq_case22}
R_v \leftarrow R_v - \frac{d_v}{2\alpha_j} R_u, \hspace{10pt} C_v \leftarrow C_v - \frac{d_v}{2\alpha_j} C_u.
\end{equation}
At this point, the only nonzero entries in the $(k'+1) \times (k'+1)$ left upper corner of the matrix obtained after performing these operations are in positions $uv$ and $vu$ (and are equal to $\alpha_j$). We perform the operations
 \begin{eqnarray}\label{eq_case23}
&&R_u  \leftarrow  R_u + \frac{1}{2}  R_v,  \hspace{3pt} C_u \leftarrow  C_u + \frac{1}{2}  C_v,  \hspace{3pt}  R_v  \leftarrow  R_v - R_u,  \hspace{3pt} C_v  \leftarrow  C_v - C_u.
\end{eqnarray}
The relevant entries of the matrix are modified as follows:
\begin{equation}
\label{eq:twobytwotrick}
\begin{pmatrix}
0 & \alpha_j \\
\alpha_j & 0
\end{pmatrix}
\rightarrow
\begin{pmatrix}
0 & \alpha_j \\
\alpha_j & \alpha_j
\end{pmatrix}
\rightarrow
\begin{pmatrix}
-\alpha_j & 0\\
0 & \alpha_j
\end{pmatrix}.
\end{equation}
We are now in the position to use the diagonal elements to diagonalize the rows associated with $v$ and $u$, as was done in Case 1, when $x_v=[0,\ldots,0]$ and $d_v \neq 0$. At the end of the step, we add $(v,-\alpha_u)$ and $(u,\alpha_u)$ to $D_i$. During this step, no pivots in $N_i^{\ast(1)}$ are modified, except for the pivot of the row associated with $u$, which is diagonalized during the step.

\section{Example}
\label{sec:example}

In this section, we illustrate how the algorithm of the previous section acts on a concrete example. Consider the symmetric matrix
$$M=(m_{ij})=
\left( \begin {array}{rrrrrr}  0&0&2&-1&0&0\\ 0&0&0&1&0&0\\ 2&0&1&3&2&0\\ -1&1&3&1&0&-1\\
 0&0&2&0&1&-1\\ 0&0&0&-1&-1&1
 \end {array}
 \right),
$$
whose graph is depicted in Figure~\ref{figure:ntc}, along with a nice tree decomposition $\mathcal{T}$. We wish to find a diagonal matrix that is congruent to $M$ using $\mathcal{T}$ as the input. We start with the following bottom-up ordering of the nodes of $\mathcal{T}$: nodes $1$ to $4$ are on the top branch, nodes $5$ to $8$ are on the bottom branch and nodes $9$ to $11$ go from the node of type join to the root of the tree.

For $i=1$, the node is a leaf with bag $B_1=\{ 1, 2, 4\}$ producing a box $N_1=\mathbf{0}_{3 \times 3}$ such that $N_1^{(1)}$ is empty and $N_1^{(2)}=N_1$ has rows labeled by the elements of $B_1$. Node $i=2$ forgets vertex $v=2$. The first step is to produce $N_2^\ast$ as in~\eqref{eq:formofNast2}. This requires updating the entries $uv$ and $vu$ such that $u \in B_1$ by summing them to the corresponding entries of the original matrix and exchanging rows and columns so that $v$ appears as the first row and column:
 $$\begin {array}{l} \texttt{2}\\ \texttt{1}\\\texttt{4}\end{array}\left( \begin {array}{r;{.5pt/1pt}rr} 0&0&1\\\hdashline[.5pt/1pt]
 0&0&0\\1&0&0\end {array} \right).
$$
The numbers on the left of each row denote their labels. When applying~\texttt{ForgetBox}, we have $d_v=0$, $x_v$ empty and $y_v=(0,1)$. This means that we are in Subcase 1(c), whose aim is to turn $v$ into a row of type-i. Since $N_2^\ast$ does not have any rows of type-i, no operations are performed and the node transmits the box
 $$N_2=\begin {array}{l} \texttt{2}\\ \texttt{1}\\\texttt{4}\end{array}\left( \begin {array}{r;{.5pt/1pt}rr} 0&0&1\\\hdashline[.5pt/1pt]
 0&0&0\\1&0&0\end {array} \right),
$$
where
$$N_2^{(1)}= \begin {array}{l}\texttt{2} \end{array}
\left(\begin {array}{cc}
0 & 1
\end{array}
\right)
\textrm{ and } N_2^{(2)}=\begin {array}{l}\texttt{1}\\\texttt{4}\end{array}\left( \begin {array}{rr}
 0&0\\0&0\end {array} \right),$$
to its parent. Node $i=3$ introduces vertex $3$. This consists in inserting a new zero row and column in $N_2$ and the corresponding zero column to $N_1$. This produces the box
 $$N_2=\begin {array}{l} \texttt{2}\\ \texttt{1}\\ \texttt{3}   \\ \texttt{4}\end{array}\left( \begin {array}{r;{.5pt/1pt}rrr} 0&0&0&1\\\hdashline[.5pt/1pt]
 0&0&0&0\\ 0&0&0&0\\1&0&0&0\end {array} \right).
$$
Note that the new row is inserted so that the labels of type-ii vertices are in increasing order. Node $i=4$ forgets vertex 1. We start by writing $N_4^\ast$ as in~\eqref{eq:formofNast2}:
$$N_{4}^*= \begin {array}{l} \texttt{1}\\ \texttt{2}\\\texttt{3}\\\texttt{4}\end{array} \left( \begin {array}{r;{.5pt/1pt}r;{.5pt/1pt}rr} 0&0&2&-1
\\\hdashline[.5pt/1pt] 0&0&0&1\\ \hdashline[.5pt/1pt] 2&0&0&0\\ -1&1&0&0\end
{array} \right).$$
This means that $v=1$, $d_v=0$, $x_v=(0)$ and $y_v=(2,-1)$. We are again in Subcase 1(c), and since the first nonzero element of $y_v$ is not in the same position as the pivot of the single row of type-i, the row associated with $v$ may simply be inserted as a type-i row to produce the box
 $$N_4=\begin {array}{l} \texttt{1} \\ \texttt{2}\\ \texttt{3}   \\ \texttt{4}\end{array}\left( \begin {array}{rr;{.5pt/1pt}rr} 0&0&2&-1\\  0&0&0&1 \\ \hdashline[.5pt/1pt]
 2&0&0&0\\-1&1&0&0\end {array} \right).
$$
Note that we have two type-i rows and two type-ii rows. Moreover, the matrix $N_4^{(1)}$ is upper triangular, and therefore is in row echelon form.

Node $i=5$ is again a leaf, with bag $B_5=\{ 3,5,6\}$, so it produces the box $N_5=\mathbf{0}_{3 \times 3}$ such that $N_5^{(1)}$ is empty and $N_5^{(2)}=N_5$ has rows labeled by the elements of $B_5$. Node $i=6$ forgets vertex 5. The first step is to produce $N_6^\ast$ as in~\eqref{eq:formofNast2}:
 $$N_6^\ast=\begin {array}{l} \texttt{5}\\ \texttt{3}\\\texttt{6}\end{array}\left( \begin {array}{r;{.5pt/1pt}rr} 1&2&-1\\\hdashline[.5pt/1pt]
 2&0&0\\-1&0&0\end {array} \right).
$$
Here $v=5$, $d_v=1$, $x_v$ is empty and $y_v=(2,-1)$. This directs us to Subcase 1(b) of \texttt{ForgetBox}, where we use $d_v$ to eliminate the nonzero elements of $y_v$. We perform the operations\footnote{The labels of the operations refer to the vertices that label each row and column of the box.} $R_3 \leftarrow R_3 - 2 R_5$, $C_3 \leftarrow C_3 - 2 C_5$, $R_6 \leftarrow R_6 + R_5$ and $C_6 \leftarrow C_6 + C_5$, in that order, which leads to the matrix
 $$\begin {array}{l} \texttt{5}\\ \texttt{3}\\\texttt{6}\end{array}\left( \begin {array}{r;{.5pt/1pt}rr} 1&0&0\\\hdashline[.5pt/1pt]
 0&-4&2\\0&2&-1\end {array} \right).
$$
The algorithm appends $(5,1)$ to the previously empty global array and node $i=6$ produces the box
 $$N_6=\begin {array}{l} \texttt{3}\\\texttt{6}\end{array}\left( \begin {array}{rr} -4&2\\2&-1\end {array} \right),
$$
where all rows have type-ii. Node $i=7$ receives $N_6$ and introduces vertex 4, which leads to new row and column of type-ii. This gives the box:
 $$N_7=\begin {array}{l} \texttt{3}\\ \texttt{4}\\ \texttt{6}\end{array}\left( \begin {array}{rrr} -4&0&2\\0&0&0\\ 2& 0&-1\end {array} \right),
$$
where all rows have type-ii. Node $i=8$ forgets vertex $v=6$, and we start by writing $N_8^\ast$ as in~\eqref{eq:formofNast2}:
 $$N_8^\ast=\begin {array}{l} \texttt{6}\\ \texttt{3}\\\texttt{4}\end{array}\left( \begin {array}{r;{.5pt/1pt}rr} 0&2&-1\\\hdashline[.5pt/1pt]
 2&-4&0\\-1&0&0\end {array} \right).
$$
As a consequence, $d_v=0$, $x_v$ is empty and $y_v=(2,-1)$. As before, we are in Subcase 1(c) and no operations are required to produce the box
 $$N_8=\begin {array}{l} \texttt{6}\\ \texttt{3}\\\texttt{4}\end{array}\left( \begin {array}{r;{.5pt/1pt}rr} 0&2&-1\\\hdashline[.5pt/1pt]
 2&-4&0\\-1&0&0\end {array} \right),
$$
where the first row has type-i and the other two rows have type-ii.

Node $i=9$ is a join. We apply \texttt{JoinBox} with input matrices $N_4$ and $N_{8}$. This first produces a matrix $N_9^{*}$ as in \eqref{eq:formofNast},
where the rows of type-i coming from the different branches are stacked on top of each other and the matrices corresponding to rows of type-ii are added to each other:
 $$N_9^{\ast}=\begin {array}{l}  \texttt{1} \\  \texttt{2}\\ \texttt{6}\\ \texttt{3}\\\texttt{4}\end{array}\left( \begin {array}{rr;{.5pt/1pt}r;{.5pt/1pt}rr} 0&0&0&2&-1 \\ 0&0&0&0&1\\ \hdashline[.5pt/1pt]
0&0&0&2&-1\\ \hdashline[.5pt/1pt]  2&0&2&-4&0 \\ -1&1&-1&0&0\end {array} \right).
$$
\texttt{JoinBox} instructs us to merge the single type-i row from the right branch, labeled by $w=6$, into the matrix labeled by the left rows in a way that preserves row echelon form, if possible. The pivot of row $w$ is the same as the pivot of the left row labeled by $v=1$, and the algorithm performs the operations $R_6 \leftarrow R_6-R_1$ and $C_6 \leftarrow C_6-C_1$, leading to
 $$\begin {array}{l}  \texttt{1} \\  \texttt{2}\\ \texttt{6}\\ \texttt{3}\\\texttt{4}\end{array}\left( \begin {array}{rr;{.5pt/1pt}r;{.5pt/1pt}rr} 0&0&0&2&-1 \\ 0&0&0&0&1\\ \hdashline[.5pt/1pt]
0&0&0&0&0\\ \hdashline[.5pt/1pt]  2&0&0&-4&0 \\ -1&1&0&0&0\end {array} \right).
$$
The row and column of $w$ became 0, so that the algorithm appends $(6,0)$ to the global array and transmits the following box to the node's parent:
 $$N_9=\begin {array}{l}  \texttt{1} \\  \texttt{2}\\ \texttt{3}\\\texttt{4}\end{array}\left( \begin {array}{rr;{.5pt/1pt}rr} 0&0&2&-1 \\ 0&0&0&1\\ \hdashline[.5pt/1pt]
  2&0&-4&0 \\ -1&1&0&0\end {array} \right).
$$
Note that two rows have type-i and two rows have type-ii. Node $i=10$ forgets vertex 4. The first step is to create $N_{10}^*$ as in~\eqref{eq:formofNast2} (note that only the entries involving $v=4$ and $u \in \{3,4\}$, which are the two vertices of type-ii):
$$N_{10}^*= \begin {array}{l} \texttt{4}\\ \texttt{1}\\\texttt{2}\\\texttt{3}\end{array} \left( \begin {array}{r;{.5pt/1pt}rr;{.5pt/1pt}r} 1&-1&1&3
\\\hdashline[.5pt/1pt] -1&0&0&2\\ 1&0&0&0\\  \hdashline[.5pt/1pt] 3&2&0&-4\end
{array} \right).$$
Here $v=4$, $d_v=1$, $x_v=(-1,1)$ and $y_v=(3)$. This leads to Case 2 in \texttt{ForgetBox}. The vertex associated with the rightmost nonzero entry of $x_v$ is $u=2$. Operations as in~\eqref{eq_case21} are performed, namely $R_1 \leftarrow R_2 + R_1$ and $C_1 \leftarrow C_2 + C_1$. This produces the matrix
$$\begin {array}{l} \texttt{4}\\ \texttt{1}\\\texttt{2}\\\texttt{3}\end{array} \left( \begin {array}{r;{.5pt/1pt}rr;{.5pt/1pt}r} 1&0&1&3
\\\hdashline[.5pt/1pt] 0&0&0&2\\ 1&0&0&0\\  \hdashline[.5pt/1pt] 3&2&0&-4\end
{array} \right).$$
Since $d_v \neq 0$, we perform operations as in~\eqref{eq_case22}, namely $R_4 \leftarrow R_4 - \frac{1}{2} R_2$ and $C_4 \leftarrow C_4 - \frac{1}{2} C_2$, reaching
$$\begin {array}{l} \texttt{4}\\ \texttt{1}\\\texttt{2}\\\texttt{3}\end{array} \left( \begin {array}{r;{.5pt/1pt}rr;{.5pt/1pt}r} 0&0&1&3
\\\hdashline[.5pt/1pt] 0&0&0&2\\ 1&0&0&0\\  \hdashline[.5pt/1pt] 3&2&0&-4\end
{array} \right).$$
Next the algorithm performs operations as in~\eqref{eq_case23}, namely $R_2 \leftarrow R_2 +\frac{1}{2} R_4$, $C_2 \leftarrow C_2 +\frac{1}{2} C_4$, $R_4 \leftarrow R_4 - R_2$ and $C_4 \leftarrow C_4 - C_2$. This produces
$$\begin {array}{l} \texttt{4}\\ \texttt{1}\\\texttt{2}\\\texttt{3}\end{array} \left( \begin {array}{r;{.5pt/1pt}rr;{.5pt/1pt}r} -1&0&0&3/2
\\\hdashline[.5pt/1pt] 0&0&0&2\\ 0&0&1&3/2\\  \hdashline[.5pt/1pt] 3/2&2&3/2&-4\end
{array} \right).$$
To conclude this step, the algorithm performs row and column operations to diagonalize the rows associated with $v=4$ and $u=2$, namely
$R_3 \leftarrow R_3 +\frac{3}{2} R_4$, $C_3 \leftarrow C_3 +\frac{3}{2} C_4$, $R_3 \leftarrow R_3 -\frac{3}{2} R_2$, $C_3 \leftarrow C_3 -\frac{3}{2} C_2$,
leading to
$$\begin {array}{l} \texttt{4}\\ \texttt{1}\\\texttt{2}\\\texttt{3}\end{array} \left( \begin {array}{r;{.5pt/1pt}rr;{.5pt/1pt}r} -1&0&0&0
\\\hdashline[.5pt/1pt] 0&0&0&2\\ 0&0&1&0\\  \hdashline[.5pt/1pt] 0&2&0&-4\end
{array} \right).$$
The algorithm appends the pairs $(4,-1)$ and $(2,1)$ to the global array and transmits the box
$$N_{10}=\begin {array}{l} \texttt{1}\\\texttt{3}\end{array} \left( \begin {array}{r;{.5pt/1pt}r} 0&2
\\\hdashline[.5pt/1pt] 2&-4\end
{array} \right)
$$
to its parent. The root node $i=11$ forgets vertex $v=3$. It first produces
$$N_{11}=\begin {array}{l} \texttt{3}\\\texttt{1}\end{array} \left( \begin {array}{r;{.5pt/1pt}r} -3&2
\\\hdashline[.5pt/1pt] 2&0\end
{array} \right),
$$
so that $d_v=-3$, $x_v=(2)$ and $y_v$ is empty. We are in case 2, and $x_v$ has a single nonzero entry, so that $u=1$. The operations in~\eqref{eq_case22}  are  $R_3 \leftarrow R_3 + \frac{3}{4} R_1$ and $C_3 \leftarrow C_3 + \frac{3}{4}   C_1$,  while the operations in~\eqref{eq_case23} are $R_1 \leftarrow R_1 +\frac{1}{2} R_3$, $C_1 \leftarrow C_1 +\frac{1}{2} C_3$, $R_3 \leftarrow R_3 - R_1$ and $C_3 \leftarrow C_3 - C_1$, leading to
$$\begin {array}{l} \texttt{3}\\\texttt{1}\end{array} \left( \begin {array}{r;{.5pt/1pt}r} -2&0
\\\hdashline[.5pt/1pt] 0&2\end
{array} \right).
$$
The algorithm adds $(3,-2)$ and $(1,2)$ to the global array and finishes. The global array produced by the algorithm is $D=\{(5,1),(6,0),(4,-1),(2,1),(3,-2),(1,2)\}$, meaning that $M$ is congruent to the diagonal matrix
$$D=
\left( \begin {array}{rrrrrr}  2&0&0&0&0&0\\ 0&1&0&0&0&0\\ 0&0&-2&0&0&0 \\ 0&0&0&-1&0&0 \\
 0&0&0&0&1&0 \\  0&0&0&0&0&0 \end {array}
 \right).
$$
This implies that $\det(M)=0$, $\rank(M)=5$ and $\inertia(M)=(3,2,1)$.

\section{Correctness of the algorithm} \label{sec:proof}

In this section, we prove Theorem~\ref{main_theorem}.  We start with some additional terminology that will be useful in our proof.

As the algorithm proceeds bottom-up on $\mathcal{T}$, we need to understand how it acts on some of its subtrees. Let $\mathcal{T}_i$ be the subtree of $\mathcal{T}$ rooted at node $i$, that is, $\mathcal{T}_i$ is the subtree of $\mathcal{T}$ induced by node $i$ and all of its descendants. Also, for any fixed $v \in V(G)$, let $\mathcal{T}(v)$ denote the subtree of $\mathcal{T}$ induced by all nodes whose bag contains $v$. We view $\mathcal{T}(v)$  as rooted at the child $j$ of the node $i$ that forgets $v$. The definition of tree decomposition ensures that, among all nodes whose bag contains $v$, $j$ is the closest to the root of $\mathcal{T}$.

Recall that the diagonal entries appended by the algorithm to the global array are pairs $(v,d_v)$, where $v$ is a vertex of $G$ (equivalently, a row of $M$) and $d_v$ is the diagonal entry associated with it. Let $D_i$ be the set of all pairs $(v,d_v)$ produced up to the end of Step $i$. Let $\pi_1(D_i)$ and $\pi_2(D_i)$ be the projections of $D_i$ onto their first and second coordinates, respectively, so that $\pi_1(D_i)$ will be the set of rows that have been diagonalized up to the end of step $i$ and $\pi_2(D_i)$ will be the (multi)set of diagonal elements found up to this step\footnote{We use the future tense to emphasize that we have yet to prove that the components of $D_i$ satisfy what is claimed.}.

To measure the effect of the operations performed by the algorithm in terms of the input matrix $M$, let $\tilde{M}(i)$ be the matrix that is congruent to $M$ obtained by performing the row and column operations performed by the algorithm up to step $i$. However, since the algorithm acts independently on the different branches of $\mathcal{T}$, for each node $i$, it is convenient to consider a matrix $M_i$ that captures only the information of $M$ that was used in the algorithm for nodes of the branch $\mathcal{T}_i$. It is defined as follows, where for clarity, we use the notation $A[u,v]$ to denote the entry $a_{uv}$ of a matrix $A$:
\begin{equation}\label{defmi}
M_i[u,v]=
\begin{cases}
m_{uv}, & \textrm{ if } \mathcal{T}(u) \cap \mathcal{T}_i \neq \emptyset,\mathcal{T}(v) \cap \mathcal{T}_i \neq \emptyset,~\{u,v\} \not\subseteq B_i,\\
0, & \textrm{ otherwise.}
\end{cases}
\end{equation}
The effect of different branches is merged at nodes of type Join. Note that, at the final step of the algorithm, we have $\mathcal{T}_m=\mathcal{T}$ and $M_m=M$.

Our technical lemmas do not refer to $\tilde{M}(i)$ and, instead, describe the relationship between $N_i$ and the diagonal elements produced in step $i$ with the matrices $M_i$ and $\tilde{M}_i$. To get started, we set $D_{-1}=D_0=\emptyset$ and $\mathcal{T}_0=\emptyset$. Let $N_0$ be the empty box and assume that no operations are performed before step 1. The matrices $M_0$ and $\tilde{M}_0$ are zero matrices of order $n$.

The lemma below states some intuitive facts about the algorithm. Part (a) mentions the simple fact that matrices obtained from one another by a sequence of pairs of the same elementary row and column operations are congruent. Part (b) establishes that any pairs added to the global array $D_i$ stay there, and that each row is diagonalized at most once. Part (c) states the important fact that a multiple of row $v$ can only be added to other rows $w$ through an elementary row operation during step $i$ if it becomes diagonalized or has type-i in this step.
\begin{lemma}\label{tech_lemma0}
The following facts hold for all $i \in \{0,\ldots,m\}$.
\begin{itemize}
\item[(a)] ~$\tilde{M}_i$ is a symmetric matrix congruent to $M_i$.

\item[(b)] ~$D_{i-1} \subseteq D_i$ and $\pi_1$ is injective over $D_i$.

\item[(c)] ~If a multiple of row (or column) $v$ has been added to another row (or column) $u$ in step $i$, then $v \in \pi_1(D_i\setminus D_{i-1}) \cup V_1(N_i)$.
\end{itemize}
\end{lemma}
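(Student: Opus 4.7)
The plan is to prove the three parts by simultaneous induction on $i \in \{0, 1, \ldots, m\}$. The base case $i = 0$ is immediate, since $\tilde{M}_0 = M_0 = 0$, $D_0 = \emptyset$, and no operation has yet been performed. I will then verify parts (b) and (c) by direct inspection of each procedure of Section~\ref{sec:algorithm}. For (b), I will locate every pseudocode line that appends a pair $(w, d_w)$ to the global array and check that $w \notin \pi_1(D_{i-1})$: the new vertex $w$ is either the unique vertex forgotten at $i$, which cannot have been diagonalized before (each vertex is forgotten at exactly one node, and by Lemma~\ref{lemma_Ni}(c) it remained type-ii in every prior box), or a vertex in $V_1(N_j) \cup V_1(N_\ell)$, which is currently in the buffer and so has not been diagonalized. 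For (c), I will trace every row-operation line~\eqref{eq_join}, \eqref{eq:1b}, \eqref{eq:1c}, \eqref{eq_case21}, \eqref{eq_case22}, \eqref{eq_case23}, together with the final diagonalizing operations of Case~2, checking in each case that the source vertex either lies in $\pi_1(D_i \setminus D_{i-1})$ or becomes a type-i row of $N_i$.

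The substantive content is (a). For Leaf and Introduce nodes, no row or column operation is performed, and a short check from~\eqref{defmi} yields $M_i = M_{i-1}$, so $\tilde{M}_i = \tilde{M}_{i-1}$ inherits symmetry and congruence to $M_i$ from the inductive hypothesis. For a Forget node $i$ forgetting $v$ with child $j$, the definition gives $M_i = M_{i-1} + E_v$, where $E_v$ is the symmetric matrix supported in row and column $v$ whose entries are $m_{uv}$ at $(u,v)$ and $(v,u)$ for $u \in B_j$ (with $m_{vv}$ at $(v,v)$), and the first action of \texttt{ForgetBox} adds precisely $E_v$ to the corresponding entries of $\tilde{M}_{i-1}$. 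I will write $\tilde{M}_{i-1} = P_{i-1} M_{i-1} P_{i-1}^T$, with $P_{i-1}$ the cumulative product of the elementary matrices from all previous row operations, and reduce the task to establishing $P_{i-1} E_v P_{i-1}^T = E_v$, which is equivalent to showing that for each $w \in B_i \cup \{v\}$ the column of $P_{i-1}$ indexed by $w$ equals the standard basis vector. This column can deviate from the standard basis vector only if some prior row operation used $R_w$ as its source; by inductive (c) such an operation would force $w \in V_1(N_{i'}) \cup \pi_1(D_{i'})$ for some $i' < i$, but Lemma~\ref{lemma_Ni}(c) together with inductive (b) rule this out for every $w \in B_i \cup \{v\}$, since such a $w$ was type-ii in every prior box containing it and has not been diagonalized. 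The identity then gives $\tilde{M}_{i-1} + E_v = P_{i-1} M_i P_{i-1}^T$, congruent to $M_i$, and the subsequent row/column operation pairs of \texttt{ForgetBox} preserve congruence and symmetry.

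For a Join node $i$ with children $j, \ell$, the definition yields $M_i = M_j + M_\ell$, with disjoint supports of nonzero entries (no bag of $\mathcal{T}$ contains a pair from $V(\mathcal{T}_j)\setminus B_i$ and $V(\mathcal{T}_\ell)\setminus B_i$, and both $M_j$ and $M_\ell$ vanish on $B_i \times B_i$). The crucial observation I plan to establish is that every row operation performed during the processing of $\mathcal{T}_j$ has source in $V(\mathcal{T}_j) \setminus B_i$: by inductive (c) the source lies in $V_1(N_{i'}) \cup \pi_1(D_{i'})$ for some $i'$ in $\mathcal{T}_j$, and Lemma~\ref{lemma_Ni}(c) excludes vertices of $B_i$ from both of these sets. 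Since $M_\ell$ is identically zero on rows and columns indexed by $V(\mathcal{T}_j) \setminus B_i$, these operations act as the identity on $M_\ell$; the symmetric claim holds for $\mathcal{T}_\ell$ and $M_j$. By linearity, $\tilde{M}_{i-1} = P_{i-1}(M_j + M_\ell)P_{i-1}^T$ is congruent to $M_j + M_\ell = M_i$, and the subsequent \texttt{JoinBox} row/column operations preserve this. The main obstacle is precisely this Join-step bookkeeping: to argue that the operations of one branch leave the other branch's contribution intact, I must combine the inductive parts (b) and (c) with Lemma~\ref{lemma_Ni} and with the separation property of tree decompositions, which ensures the two branches share vertices only in $B_i$.
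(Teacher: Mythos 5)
Your overall strategy mirrors the paper's: a simultaneous induction on $i$, proving (a)--(c) case-by-case over the four procedures, with (a) hinging on the observation that rows indexed by the current bag were never used as sources of row operations in the subtree below. The main obstacle, however, is a circularity in your justification of exactly that observation. In the Forget case you correctly reduce (a) to showing that no $w\in B_j$ was ever a source while processing $\mathcal{T}_j$, and you invoke inductive (c) to say a source $w$ at step $i'<i$ must lie in $V_1(N_{i'})\cup\pi_1(D_{i'}\setminus D_{i'-1})$. But you then claim that ``Lemma~\ref{lemma_Ni}(c) together with inductive (b) rule this out for every $w\in B_i\cup\{v\}$,'' and these two facts do not suffice. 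Lemma~\ref{lemma_Ni}(c) identifies $V_2(N_{i'})$ with $B_{i'}$; it does not assert that $V_1(N_{i'})$ is disjoint from $B_{i'}$, and says nothing about $\pi_1(D_{i'})$. Inductive (b) concerns monotonicity and injectivity of the global array, not the location of $w$ relative to earlier boxes. The missing tool is Lemma~\ref{tech_lemma} (specifically (b), (c), (d)): from $w\in V_1(N_{i'})\cup\pi_1(D_{i'}\setminus D_{i'-1})$ one first deduces $\mathcal{T}(w)\subseteq\mathcal{T}_{i'}\subseteq\mathcal{T}_j$, and only then can $w\in B_j$ (hence $j\in\mathcal{T}(w)$) be contradicted when $i'<j$, with the residual case $i'=j$ handled by the mutual-disjointness statement of Lemma~\ref{tech_lemma}(d). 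The paper does exactly this, via Claims~\ref{claim:joinbox} and~\ref{claim:forgetbox}, inside a single simultaneous induction over Lemmas~\ref{lemma_Ni}, \ref{tech_lemma0}, \ref{tech_lemma}, and \ref{tech_lemma2}. Your proposal, by trying to establish Lemma~\ref{tech_lemma0} using only Lemma~\ref{lemma_Ni} and the inductive hypothesis for Lemma~\ref{tech_lemma0} itself, is missing an essential inductive ingredient. The same gap appears in your Join argument (``Lemma~\ref{lemma_Ni}(c) excludes vertices of $B_i$ from both of these sets'' --- it does not) and, in milder form, in your sketch of (b).

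A smaller point: $M_i=M_{i-1}$ fails in general for Leaf nodes. By~\eqref{defmi}, $M_i$ is the zero matrix when $i$ is a leaf, whereas $M_{i-1}$ belongs to a different branch in the post-order and need not be zero. The conclusion for leaves is still trivially true ($\tilde{M}_i=M_i=0$), but the stated reduction to the inductive hypothesis at $i-1$ is incorrect. For Introduce and Forget nodes, $i-1$ is indeed the unique child, so your identifications $M_i=M_{i-1}$ (Introduce) and $M_i=M_{i-1}+E_v$ (Forget) agree with the paper's $M_i=M_j$ and $M_i=M_j+M_{v^\ast}$.
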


The second lemma relates subtrees of type $\mathcal{T}(v)$ and $\mathcal{T}_i$ with vertex types. The first item says that if vertex $v$ appears in the bag of some node of $\mathcal{T}_i$, then $v$ is associated with a row of the box produced by $i$ or it has already been diagonalized. Conversely, part (b) says that if $v$ is associated with a row of the box produced by $i$ or if it has been diagonalized during step $i$, then it must appear in the bag of some node of $\mathcal{T}_i$. Part (c) implies that, if $v$ has been diagonalized up to step $i$ or has type-i in $N_i$, then $\mathcal{T}(v)$ is either fully contained or completely disjoint from $\mathcal{T}_i$. Part (d) confirms the intuitive fact that the concepts of being type-i, type-ii or diagonalized are mutually exclusive.
\begin{lemma}\label{tech_lemma}
For all $i \in \{0, \ldots, m\}$, $D_i$ and $N_i$ have the following additional properties.
\begin{itemize}

\item[(a)] ~Let $v$ be such that $\mathcal{T}(v) \cap \mathcal{T}_i \neq \emptyset$. Then $v \in V(N_i) \cup \pi_1(D_i)$.

\item[(b)]  ~If $v \in \pi_1(D_i\setminus D_{i-1}) \cup V(N_i)$, then $\mathcal{T}_i \cap \mathcal{T}(v) \neq \emptyset$.

\item[(c)] ~If $v \in \pi_1(D_i) \cup V_1(N_i)$ and $\mathcal{T}_i \cap \mathcal{T}(v) \neq \emptyset$, then $\mathcal{T}(v)$ is a subtree of $ \mathcal{T}_i$.

\item[(d)] ~The sets $\pi_1(D_i)$, $V_1(N_i)$ and $V_2(N_i)$ are mutually disjoint.

\end{itemize}
\end{lemma}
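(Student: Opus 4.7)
The plan is to prove all four parts simultaneously by induction on $i \in \{0, 1, \ldots, m\}$, treating the four node types (Leaf, Introduce, Forget, Join) as separate sub-cases at each inductive step. The base case $i = 0$ is vacuous since $N_0$, $D_0$, and $\mathcal{T}_0$ are all empty. For the inductive step, I would exploit the decomposition $\mathcal{T}_i = \{i\} \cup \bigcup_{c} \mathcal{T}_{j_c}$, where $j_c$ ranges over the children of $i$, combined with a careful reading of each procedure in Section~\ref{sec:algorithm} and repeated appeals to Lemma~\ref{tech_lemma0}.

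For part (a), if $\mathcal{T}(v)$ meets some $\mathcal{T}_{j_c}$, the inductive hypothesis yields $v \in V(N_{j_c}) \cup \pi_1(D_{j_c})$, and tracing the effect of the procedure at step $i$ on this vertex (vertices are only ever removed from the box when they are appended to $D$, and are never re-introduced) delivers $v \in V(N_i) \cup \pi_1(D_i)$. Otherwise $\mathcal{T}(v) \cap \mathcal{T}_i$ reduces to the singleton $\{i\}$, forcing $v \in B_i$; in every node type the containment $B_i \subseteq V_2(N_i) \cup \pi_1(D_i)$ is directly verifiable from the procedure (for Leaf and Introduce nodes $B_i = V_2(N_i)$; for Forget and Join nodes the bag is contained in the type-ii set up to a just-forgotten vertex that has been appended to $D_i$).

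For parts (b) and (d), the analysis is largely structural. For (b), any vertex appearing in $D_i \setminus D_{i-1}$ is either the vertex just forgotten at node $i$ (which belongs to $B_j \cap B_i$ for the appropriate child $j$) or a previously type-i vertex handled by the inductive hypothesis, while any $v \in V(N_i)$ either sits in $B_i$ or is type-i and inherited from a child, so in either case $\mathcal{T}_i \cap \mathcal{T}(v) \neq \emptyset$. For (d), the disjointness of $V_1(N_i)$ and $V_2(N_i)$ is immediate from the block structure of Lemma~\ref{lemma_Ni}, and the disjointness of $\pi_1(D_i)$ from $V(N_i)$ holds because every procedure that appends $(v, d_v)$ to the global array simultaneously removes the row and column labeled $v$ from the box.

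Part (c) carries the main content. The governing fact is that $\mathcal{T}(v)$ is rooted at the child of the unique Forget node for $v$, so $\mathcal{T}(v) \subseteq \mathcal{T}_{i'}$ whenever $i'$ is the index of that Forget node. It therefore suffices to show that every $v \in \pi_1(D_i) \cup V_1(N_i)$ has been forgotten by step $i$: vertices are added to $\pi_1(D_i)$ only in ForgetBox (Subcases~1(a), 1(b), 1(c) for the vertex $v$ being forgotten, or Case~2 additionally for a type-i vertex $u$) or in JoinBox (for type-i rows that collapse to zero), and type-i status is created only in Subcase~1(c) of ForgetBox; in all cases the Forget node of the affected vertex lies in $\mathcal{T}_i$. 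The main obstacle is the join case, where one must justify the claim used in the JoinBox description that $V_1(N_j) \cap V_1(N_\ell) = \emptyset$ for the two children $j, \ell$. This follows by applying part~(c) inductively to each child: a common type-i vertex would force $\mathcal{T}(v) \subseteq \mathcal{T}_j \cap \mathcal{T}_\ell = \emptyset$, a contradiction. A parallel check is required for the type-i vertex $u$ diagonalized in Case~2 of ForgetBox, where the inductive hypothesis at the child $j$ gives $\mathcal{T}(u) \subseteq \mathcal{T}_j \subseteq \mathcal{T}_i$, closing the argument.
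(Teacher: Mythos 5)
Your proposal follows essentially the same route as the paper: a simultaneous induction over $i$ on parts (a)--(d) together with Lemmas~\ref{lemma_Ni}, \ref{tech_lemma0} and~\ref{tech_lemma2}, with a case analysis by node type, and it correctly isolates the central structural observation for part~(c) (that $\mathcal{T}(v)$ is rooted at the child of $v$'s Forget node) as well as the $V_1(N_j)\cap V_1(N_\ell)=\emptyset$ claim needed for JoinBox.

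There is, however, a real thin spot in your treatment of part~(d) (and in the parenthetical ``never re-introduced'' claim in part~(a)). Your argument for $\pi_1(D_i)\cap V(N_i)=\emptyset$ is the purely local observation that a procedure which appends $(v,d_v)$ to the array removes the row $v$ from the box it is producing. That establishes $v\notin V(N_\ell)$ at the step $\ell$ where $v$ was diagonalized, but it does not by itself rule out $v$ re-entering a box at a later Introduce node, or $v$ being present in the box of a sibling branch that is merged in at a subsequent Join. (Concretely: why could an Introduce node $i$ not introduce a vertex $v^\ast$ that already sits in $\pi_1(D_{i-1})$?) The paper closes exactly this hole by a non-local argument: pick $\ell<i$ with $v\in\pi_1(D_\ell\setminus D_{\ell-1})$, apply parts~(b) and~(c) of this lemma at step $\ell$ to conclude that $\mathcal{T}(v)$ is a subtree of $\mathcal{T}_\ell$, and then use the post-order traversal to get $\mathcal{T}_\ell\cap\mathcal{T}_i=\emptyset$ (or $\mathcal{T}_\ell\subseteq\mathcal{T}_j$ for the relevant child $j$), whence $v\notin B_i$. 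You have all the ingredients --- your parts~(b) and~(c) at earlier indices are available by the inductive hypothesis --- but the proposal never actually invokes them in the argument for~(d), so the ``never re-introduced'' assertion is left as an unproved claim rather than a consequence. Once that cross-branch step is made explicit (which also forces you to notice, as the paper does, that (d) for Introduce nodes must be argued before (c)), the proof is complete.
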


The third result relates the entries of the matrices $N_i$ and of the array $D$ produced by the algorithm with the entries of $\tilde{M}_i$.
\begin{lemma}\label{tech_lemma2}
For all $i \in \{0, \ldots, m\}$, $\tilde{M}_i$ has the following additional properties.
\begin{itemize}
\item[(a)] ~If $u,v \in B_i$, then the entry $uv$ of $\tilde{M}_i$ is equal to the entry $uv$ of $N_i^{(2)}$.

\item[(b)] ~If $v \in V_1(N_i)$, then the row (and column) associated with $v$ in $\tilde{M}_i$ satisfies the following. For any $u \in V(G)$, the entries $uv$ and $vu$ are equal to 0 if $u \notin B_i$ and are equal to the corresponding entries in $N_i$ if $u \in B_i$.

\item[(c)] ~If $(v,d_v) \in D_i$, $\mathcal{T}_i \cap \mathcal{T}(v) \neq \emptyset$, then the row (and column) associated with $v$ in $\tilde{M}_i$ consists of zeros, with the possible exception of the $v$th entry, which is equal to $d_v$.

\item[(d)] ~If $\mathcal{T}_i \cap \mathcal{T}(v) = \emptyset$, then the row (and column) associated with $v$ in $\tilde{M}_i$ are zero.

\end{itemize}
\end{lemma}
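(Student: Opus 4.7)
The plan is to prove Lemma~\ref{tech_lemma2} by induction on $i \in \{0, \ldots, m\}$, assuming Lemmas~\ref{tech_lemma0} and~\ref{tech_lemma} throughout. The base case $i=0$ is immediate since $M_0 = \tilde{M}_0 = 0$, $N_0$ is empty and $D_0 = \emptyset$, so all four claims hold vacuously. For the inductive step, the key idea is to track, at each node type, how the algorithm's elementary row/column operations propagate from $\tilde{M}_j$ (and $\tilde{M}_\ell$, for Join) to $\tilde{M}_i$, and how the definition~\eqref{defmi} forces $M_i$ to evolve from $M_j$. Since every row operation is paired with the corresponding column operation, symmetry and congruence are automatic by Lemma~\ref{tech_lemma0}(a).

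The easy cases are Leaf and Introduce. For a \textbf{Leaf} node, no operations are performed, $N_i^{(2)}$ is a zero matrix indexed by $B_i$, and every pair of active vertices lies in $B_i$, so $M_i = \tilde{M}_i = 0$; previously diagonalized rows satisfy $\mathcal{T}(v) \cap \mathcal{T}_i = \emptyset$ by Lemma~\ref{tech_lemma}(c), so (c) holds vacuously. For an \textbf{Introduce} node, the connectivity axiom of tree decompositions implies that $v \notin B_{j'}$ for every $j' \in \mathcal{T}_j$, so hypothesis (d) at $j$ gives a zero $v$-row in $\tilde{M}_j$. Since no row/column operations are performed, the $v$-row of $\tilde{M}_i$ remains zero, and the update $M_i$ vs.\ $M_j$ only affects entries with $\{u,v\} \subseteq B_i$, which are zero in $M_i$ by definition and in $N_i^{(2)}$ by the newly appended zero row and column.

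For \textbf{Join} nodes, no new entries $m_{uv}$ are introduced. By Lemma~\ref{tech_lemma}(c)--(d) at $j$ and $\ell$, the rows of $\tilde{M}_j$ and $\tilde{M}_\ell$ supported outside $B_i = V_2(N_j) = V_2(N_\ell)$ correspond to disjoint vertex sets, so the natural candidate for $\tilde{M}_i$ before the merge operations is the sum $\tilde{M}_j + \tilde{M}_\ell$ (with suitable zero extensions), and $N_i^{*(2)} = N_j^{(2)} + N_\ell^{(2)}$ yields (a) at this stage by combining hypothesis (a) at $j$ and $\ell$. The operations~\eqref{eq_join} then act on type-i rows, and symmetric bookkeeping on $\tilde{M}_i$ preserves (b); when a right row becomes zero in $N_i^{*(1)}$, the corresponding row of $\tilde{M}_i$ also becomes zero, justifying the appended pair $(v,0) \in D_i$ and giving (c) for $v$. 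Property (d) is inherited automatically since operations only touch rows that already belong to $\mathcal{T}_j\cup\mathcal{T}_\ell$.

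The main obstacle is the \textbf{Forget} node, which is split into four subcases. The initial update $N_i^{*(2)}[u,v] \leftarrow N_j^{(2)}[u,v] + m_{uv}$ for $u \in B_j$ captures exactly the transition $M_j \to M_i$: by~\eqref{defmi}, the entries $m_{uv}$ with $u \in B_j$ were suppressed while $\{u,v\} \subseteq B_j$ and are now reinstated since $v \notin B_i$. Hypothesis (a) at $j$ then gives (a) for an intermediate $\tilde{M}_i'$ before any row/column eliminations. Subcases 1(a)--(b), via~\eqref{eq:1b}, zero out the $v$-row outside the diagonal, giving (c) for $v$. Subcase 1(c) moves $v$ into $V_1(N_i)$: the operations~\eqref{eq:1c} combine $R_v$ only with rows $u \in V_1(N_j)$ whose support in $\tilde{M}_j$ lies in $B_j$ by hypothesis (b), so the new $v$-row of $\tilde{M}_i$ retains support in $B_i \cup \{v\}$ as demanded by (b). The trickiest step is Case 2, where the $2 \times 2$ trick~\eqref{eq:twobytwotrick} must be tracked row-by-row in $\tilde{M}_i$: one verifies that~\eqref{eq_case21} annihilates all entries of the $v$-row except those at positions $uv$, $vu$ and $vv$; that~\eqref{eq_case22} eliminates the $vv$ entry; and that~\eqref{eq_case23} produces the diagonal pair $(-\alpha_j, \alpha_j)$. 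Combining this with hypothesis (b) at $j$ for the row $u$ (which is diagonalized in this step), one checks that (a) holds for the surviving $B_i$-submatrix of $\tilde{M}_i$ relative to $N_i^{(2)}$, completing the induction.
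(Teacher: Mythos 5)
Your proposal follows the same basic strategy as the paper: induction on $i$ with a case split by node type, tracking how the algorithm's row/column operations translate into changes of $\tilde{M}_i$, using the earlier lemmas to control what happens outside the box. The Leaf, Join, and Forget sketches, while compressed, align with the paper's proof (in particular the key observations that $M_i = M_j + M_\ell$ for Join and $M_i = M_j + M_{v^\ast}$ for Forget, which the paper isolates as Claims~\ref{claim:joinbox} and~\ref{claim:forgetbox}).

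However, there is a genuine gap in the \textbf{Introduce} case. You assert that ``the update $M_i$ vs.\ $M_j$ only affects entries with $\{u,v\} \subseteq B_i$.'' That is not the full picture. Looking at~\eqref{defmi}, the entries $M_i[v^\ast,w]$ for $w \notin B_i$ with $\mathcal{T}(w) \cap \mathcal{T}_j \neq \emptyset$ could also change: in $M_j$ they are zero because $\mathcal{T}(v^\ast) \cap \mathcal{T}_j = \emptyset$, but in $M_i$ the conditions $\mathcal{T}(v^\ast) \cap \mathcal{T}_i \neq \emptyset$, $\mathcal{T}(w) \cap \mathcal{T}_i \neq \emptyset$, $\{v^\ast, w\} \not\subseteq B_i$ are all satisfied, so $M_i[v^\ast,w]$ is defined to be $m_{v^\ast w}$. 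To conclude $M_i = M_j$, one must additionally argue that $m_{v^\ast w} = 0$ for all such $w$. The paper does this by observing that, for $w \notin B_i$ with $\mathcal{T}(w) \cap \mathcal{T}_j \neq \emptyset$, Lemma~\ref{tech_lemma}(a) and (c) at step $j$ force $\mathcal{T}(w)$ to be a subtree of $\mathcal{T}_j$, whereas $v^\ast$ appears only in bags outside $\mathcal{T}_j$; hence no bag contains both, so $\{v^\ast,w\} \notin E(G)$ and $m_{v^\ast w} = 0$. Without this step, $M_i = M_j$ does not follow from your argument, and consequently neither does $\tilde{M}_i = \tilde{M}_j$, which is the crux of the Introduce case. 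You should also note that item (c) for the Introduce case needs the disjointness statement Lemma~\ref{tech_lemma}(d) at step $i$ itself (to rule out $v = v^\ast$), so this part of Lemma~\ref{tech_lemma} must have been established for step $i$ before invoking it, not merely at steps $j < i$.
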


The proofs of Lemmas~\ref{lemma_Ni}, \ref{tech_lemma0},~\ref{tech_lemma} and~\ref{tech_lemma2} are by induction on $i$. Before presenting this inductive argument, we argue that they imply that algorithm \texttt{CongruentDiagonal} gives the correct output, namely a diagonal matrix $D$ that is congruent to $M$.

\begin{proof}[Proof of Theorem~\ref{main_theorem} (Correctness)]
By definition of $M_i$ in~\eqref{defmi} and the fact that $\mathcal{T}_m=\mathcal{T}$, we have $M_m=M$. By Lemma~\ref{tech_lemma0}(a), we have that $M$ is congruent to $\tilde{M}_m$, so it suffices to prove that $\tilde{M}_m$ is a diagonal matrix whose diagonal entries have been appended to the global array.

First we argue that $\pi_1(D_m)=V$. Let $v \in V$. Since the root $m$ of $\mathcal{T}$ has an empty bag, the box $N_m$ produced by it must be empty by Lemma~\ref{lemma_Ni}, items (a) and (c). Also, $\mathcal{T}(v) \cap \mathcal{T}_m = \mathcal{T}(v) \cap \mathcal{T}  \neq \emptyset$ and, by Lemma~\ref{tech_lemma}(a),  $v \in V(N_m) \cup \pi_1(D_m)=\pi_1(D_m)$, as claimed.

The result now follows from Lemma~\ref{tech_lemma2}(c), which implies that $\tilde{M}_m$ is a diagonal matrix such that, for all $v \in [n]$, the entry $vv$ is given by $\pi_2 \circ \pi_1^{-1}(v)=d_v$. These are the elements produced by \texttt{CongruentDiagonal}, as Lemma~\ref{tech_lemma0}(b) tells us that no elements originally appended to the global array at step $i$ may be altered in a later step.

\end{proof}

We will now prove the lemmas. Instead of proving them one at a time, we will prove that all of their statements hold by induction. In fact, it is easy to see that the steps of the algorithm were designed so that the boxes satisfy the properties in the statement of Lemma~\ref{lemma_Ni}. What needs to be proved is that these boxes really give the information that is expected from them.

At the start of the algorithm, when $i=0$, $N_0$ is empty, $\mathcal{T}_0$ is empty, the matrices $M_0$ and $\tilde{M}_0$ are both zero matrices of order $n$, and $D_{-1}=D_0=\emptyset$. No operations have been performed. So Lemmas~\ref{tech_lemma0},~\ref{tech_lemma} and~\ref{tech_lemma2} hold trivially.
Fix $i \in \{1,\ldots,m\}$ and assume that the lemmas hold for all $j<i$. Our discussion will depend on the type of node $i$.

\vspace{5pt}

\noindent \textbf{Case 1 (Leaf).} If $i$ is a leaf, then $M_i$ is a zero matrix of order $n$. Recall that the box $N_i$ produced at this step has no rows of type-i, and $N_i^{(2)}$ is a zero matrix of order $|B_i|$ whose rows and columns are associated with the elements of $B_i$.

Concerning Lemma~\ref{tech_lemma0}(a), the algorithm does not perform any row or column operation during step $i$, so $\tilde{M}_i=M_i$ and (a) holds. Item (b) holds by the induction hypothesis because $D_i=D_{i-1}$. Item (c) holds by vacuity.

We next move to Lemma~\ref{tech_lemma}. Here $\mathcal{T}_i$ is a rooted tree with a single node, whose bag is $B_i$. So $\mathcal{T}(v) \cap \mathcal{T}_i \neq \emptyset$ if and only if $v \in B_i$, which means that $v \in V_2(N_i)=V(N_i)$. This implies (a) and (b), where the latter uses $D_i\setminus D_{i-1}=\emptyset$. Item (c) holds by vacuity, as no vertex $v$ satisfies the hypothesis. For part (d), we know that $V_1(N_i)=\emptyset$, so it suffices to consider $\pi_1(D_i)$ and $V_2(N_i)=B_i$. Fix $v \in \pi_1(D_i)$. Since no elements are appended to the global array at step $i$ and $D_0=\emptyset$, let $1 \leq j<i$ such that $v \in \pi_1(D_j\setminus D_{j-1})$. By Lemma~\ref{tech_lemma}(b) and (c), $\mathcal{T}(v)$ is a subtree of $\mathcal{T}_j$. Clearly, $\mathcal{T}_j \cap \mathcal{T}_i=\emptyset$, which implies that $v \notin B_i$, as required.

 We conclude with Lemma~\ref{tech_lemma2}. Items (a) and (d) hold because all entries of both $N_i$ and $M_i$ are zero. Items (b) and (c) hold by vacuity, as no vertex $v$ satisfies the hypothesis.

\vspace{5pt}

\noindent \textbf{Case 2 (Introduce).}  Assume that $i$ has type Introduce, $j$ is its child and $v^*$ denotes the vertex that has been introduced. Recall that the box $N_i$ produced at this step simply adds a new type-ii row (and column) whose entries are 0 to the box $N_j$ transmitted by node $j$. We claim that $M_i=M_j$. To see why this is true, consider the definition~\eqref{defmi}. On the one hand, since $B_j=B_i \setminus \{v^*\}$ the entries $uv$ must be the same if $u \neq v^*$ and $v \neq v^\ast$. On the other hand, we claim that $m_{uv^\ast}=0$ for any $u \in V$ such that $\mathcal{T}(u) \cap \mathcal{T}_i\neq \emptyset$ and $u \notin B_i$. The latter two conditions imply that $\mathcal{T}(u)$ is a subtree of $\mathcal{T}_j$ by Lemma~\ref{tech_lemma}(c), which holds for step $j$ by the induction hypothesis. Since $v^\ast$ is introduced at $i$, there cannot be a bag containing both $u$ and $v^\ast$, and $\{u,v^\ast\} \notin E(G)$ because $\mathcal{T}$ is a tree decomposition of $G$. This confirms that $M_i=M_j$.

We now consider each lemma separately. We start with Lemma~\ref{tech_lemma0}. Since the algorithm does not perform any row or column operation during step $i$ and $M_i=M_j$, we must have $\tilde{M}_i=\tilde{M}_j$ and (a) holds by the induction hypothesis. As no diagonal elements are produced at this step, $D_i=D_{i-1}$ and (b) holds by the induction hypothesis. Item (c) holds by vacuity.

Next consider Lemma~\ref{tech_lemma}. For part (a), let $v$ be such that $\mathcal{T}(v) \cap \mathcal{T}_i \neq \emptyset$. If $v=v^\ast$, then $v \in V(N_i)$. Otherwise $\mathcal{T}(v) \cap \mathcal{T}_j \neq \emptyset$, so that by the induction hypothesis
$$v \in V(N_j) \cup \pi_1(D_j) \stackrel{(I)}{\subseteq} V(N_i) \cup \pi_1(D_{i-1})=V(N_i) \cup \pi_1(D_{i}).$$
Note that in (I) we inductively used Lemma~\ref{tech_lemma0}(b) to conclude that $D_j\subseteq  D_{i-1}$.

For part (b), let $v$ be such that $v \in \pi_1(D_i \setminus D_{i-1}) \cup V(N_i)=V(N_i)$. Either $v=v^\ast \in B_i$, so $\mathcal{T}(v) \cap \mathcal{T}_i \neq \emptyset$, or $v \in V(N_j)$, so that  $\mathcal{T}(v) \cap \mathcal{T}_i \supseteq \mathcal{T}(v) \cap \mathcal{T}_j  \neq \emptyset$ by the induction hypothesis.

Before proving part (c), we prove part (d). We do this in detail because this type of argument will be used several times. By the induction hypothesis, $\pi_1(D_{j}) \cup V_1(N_j) \cup V_2(N_j)$ is a disjoint union. Since $V_1(N_i)=V_1(N_j)$ and $V_2(N_i)=V_2(N_j) \cup \{v^\ast\}$, it suffices to prove that $\pi_1(D_i\setminus D_j) \cap V(N_i) = \emptyset$ and that $v^\ast \notin V_1(N_i) \cap \pi_1(D_i)$. For the first part, fix $u \in \pi_1(D_i\setminus D_j)= \pi_1(D_{i-1}\setminus D_j)$. Let $\ell$, where $j<\ell<i$, such that $u \in \pi_1(D_\ell\setminus D_{\ell-1})$. By the induction hypothesis, Lemma~\ref{tech_lemma}(b) and (c) hold at step $\ell$, so $\mathcal{T}(u)$ is a subtree of $\mathcal{T}_\ell$. The trees $\mathcal{T}_\ell$ and $\mathcal{T}_i$ are disjoint because of the bottom-up ordering of the nodes of $\mathcal{T}$. By part (b), any $v \in V(N_i)$ is such that $\mathcal{T}(v) \cap \mathcal{T}_i \neq \emptyset$, establishing that $\pi_1(D_i\setminus D_j) \cap V(N_i) = \emptyset$. Showing that  $v^\ast \notin V_1(N_i) \cap \pi_1(D_i)$ is similar: given $u \in \pi_1(D_i)=\pi_1(D_{i-1})$, an argument as above would also give $\ell<i$  such that $\mathcal{T}(u)$ is a subtree of $\mathcal{T}_\ell$, so that $u \neq v^\ast \in B_i$. If $w \in  V_1(N_i)=V_1(N_j)$, Lemma~\ref{tech_lemma}(b) and (c) applied at step $j$ implies that  $\mathcal{T}(w)$ is a subtree of $\mathcal{T}_j$, and again $w \neq v^\ast$.

For part (c), let $v$ be such that $v \in \pi_1(D_i) \cup V_1(N_i)$ and $\mathcal{T}(v) \cap \mathcal{T}_i \neq \emptyset$. Since $v^\ast \in V_2(N_i)$ (and by part (d), which has already been proved), we must have $v\neq v^\ast$ and $\mathcal{T}(v) \cap \mathcal{T}_i \neq \emptyset$. This means that $\mathcal{T}(v) \cap \mathcal{T}_j \neq \emptyset$ and, by Lemma~\ref{tech_lemma}(a), $v \in V(N_j) \cup \pi_1(D_j)$.  Since $V_2(N_j) \subset V_2(N_i)$ and $v\notin V_2(N_i)$, we have $v \in V_1(N_j) \cup \pi_1(D_j)$, so that, by the induction hypothesis, $\mathcal{T}(v)$ is a subtree of $\mathcal{T}_j$, and hence of $\mathcal{T}_i$.

We finish with Lemma~\ref{tech_lemma2}. For part (a), fix $u,v \in B_i$. If $v^\ast \notin \{u,v\}$, we know that $u,v \in B_j$ and
$\tilde{M}_i[u,v]=\tilde{M}_j[u,v]=N_j^{(2)}[u,v]=N_i^{(2)}[u,v].$
The first and last equalities are due to the way step $i$ works, the second equality holds by the induction hypothesis.  Next assume that $v^\ast \in \{u,v\}$, say $v=v^\ast$. Since $\mathcal{T}_j \cap \mathcal{T}(v^\ast)=\emptyset$, an inductive application of Lemma~\ref{tech_lemma2}(d) tells us that the row and column associated with $v^\ast$ in $\tilde{M}_j$ are zero. The result now follows from $\tilde{M}_i=\tilde{M}_j$ and the fact that the row and column associated with $v^\ast$ in $N_i$ are zero.

For part (b), assume that $v \in V_1(N_i)= V_1(N_j)$. By the induction hypothesis, the row and column associated with $v$ in $\tilde{M}_j$ satisfy the following. For any $u \in V(G)$, the entries $uv$ and $vu$ are equal to 0 if $0 \notin B_j$ and are equal to the corresponding entries in $N_j$ if $u \in B_j$. Nothing changes at step $i$ except from the fact that $v^\ast  \notin B_j$, but $v^\ast  \in B_i$. The result holds because the row and column associated with $v^\ast$ in $N_i$ are zero.

For part (c), fix $v$ such that $(v,d_v) \in D_i$ and $\mathcal{T}_i \cap \mathcal{T}(v) \neq \emptyset$. By Lemma~\ref{tech_lemma}(d), $v \neq v^\ast$, so $\mathcal{T}_j \cap \mathcal{T}(v) \neq \emptyset$. We claim that $(v,d_v) \in D_j$, which implies the desired conclusion by the induction hypothesis and the fact that $\tilde{M}_i=\tilde{M}_j$. Assume for a contradiction that $(v,d_v) \notin D_j$, so that $(v,d_v) \in D_\ell \setminus D_{\ell-1}$, for some $\ell \in \{j+1,i-1\}$. By Lemma~\ref{tech_lemma}(b) and (c), applied at step $\ell$, we see that $\mathcal{T}(v)$ is a subtree of $\mathcal{T}_\ell$. This contradicts the assumption $\mathcal{T}_i \cap \mathcal{T}(v) \neq \emptyset$, as the bottom-up ordering of the nodes of $\mathcal{T}$ leads to $\mathcal{T}_\ell \cap \mathcal{T}_i=\emptyset$.

For part (d), fix $v$ such that $\mathcal{T}_i \cap \mathcal{T}(v) = \emptyset$. So $\mathcal{T}_j \cap \mathcal{T}(v) = \emptyset$ and the row and column associated with $v$ in $\tilde{M}_j$ are 0 by the induction hypothesis. The result follows from $\tilde{M}_i=\tilde{M}_j$.

\vspace{5pt}

\noindent \textbf{Case 3 (Join).} If $i$ has type Join, let $j$ and $\ell$ be its children and assume that $j<\ell$. Since the construction of $N_i$ from $N_j$ and $N_\ell$ is more involved, we shall not repeat it here. By Lemma~\ref{lemma_Ni}, we have $V_2(N_j)=V_2(N_\ell)=B_j=B_i$. Moreover, let $u \in V_1(N_j)$ and $v \in V_1(N_\ell)$. By the induction hypothesis, Lemma~\ref{tech_lemma}(b) and (c), applied at steps $j$ and $\ell$, tells us that $\mathcal{T}(u)$ and $\mathcal{T}(v)$ are subtrees of $\mathcal{T}_j$ and $\mathcal{T}_\ell$, respectively. Since $\mathcal{T}_j \cap \mathcal{T}_\ell=\emptyset$, we conclude that $u \neq v$ and therefore $V_1(N_j) \cap V_1(N_\ell)=\emptyset$, proving a claim made in the description of \texttt{JoinBox}. The next result describes how the matrix $M_i$ changes when two branches merge.

\begin{claim}\label{claim:joinbox}
The matrix $M_i$ in~\eqref{defmi} satisfies $M_i=M_j+M_\ell$. Moreover, if a multiple of row $v$ has been added to another row $u$ while processing a node in $\mathcal{T}_j$, then all entries in the row/column of $v$ in $M_\ell$ are zero. Similarly, if a multiple of row $v$ has been added to another row $u$ while processing a node in $\mathcal{T}_\ell$, then all entries in the row/column of $v$ in $M_j$ are zero.
\end{claim}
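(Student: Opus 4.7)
My plan is to prove the equality $M_i = M_j + M_\ell$ by a case analysis on which of $u,v$ lie in $B_i$, and then to derive the ``moreover'' statement as a quick consequence of the inductive hypotheses in Lemmas~\ref{tech_lemma0} and~\ref{tech_lemma}. The key structural fact I will use repeatedly is that $B_i = B_j = B_\ell$ and that $\mathcal{T}_j \cap \mathcal{T}_\ell = \emptyset$, together with the connectedness axiom of a tree decomposition: if a vertex $w$ is not in $B_i$, then the connected subtree $\mathcal{T}(w)$ does not contain node $i$, so it must lie entirely inside $\mathcal{T}_j$, entirely inside $\mathcal{T}_\ell$, or entirely outside $\mathcal{T}_i$.

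For the main identity I first note the auxiliary observation that if $u,v \notin B_i$ with $\mathcal{T}(u) \subseteq \mathcal{T}_j$ and $\mathcal{T}(v) \subseteq \mathcal{T}_\ell$, then $u$ and $v$ cannot share a bag, hence $\{u,v\} \notin E(G)$ and $m_{uv} = 0$. With that in hand I would split into three cases. \textbf{Case A} ($u,v \in B_i$): each of $M_i[u,v]$, $M_j[u,v]$, $M_\ell[u,v]$ is zero by definition, since $\{u,v\} \subseteq B_i = B_j = B_\ell$. \textbf{Case B} (neither $u$ nor $v$ in $B_i$): by the structural observation above, the pairs $(\mathcal{T}(u), \mathcal{T}(v))$ either both sit inside the same $\mathcal{T}_j$ or $\mathcal{T}_\ell$ (in which case $M_i$ and exactly one of $M_j, M_\ell$ pick up $m_{uv}$, while the other is $0$), or they sit on opposite sides (in which case $m_{uv}=0$ and all three entries vanish), or at least one lies outside $\mathcal{T}_i$ (in which case all three entries vanish). \textbf{Case C} (exactly one of $u,v$ is in $B_i$, say $u \in B_i$, $v \notin B_i$): then $\{u,v\} \not\subseteq B_i$, the subtree $\mathcal{T}(u)$ meets both $\mathcal{T}_j$ and $\mathcal{T}_\ell$ via node $i$, and $\mathcal{T}(v)$ meets at most one of them. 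A direct check shows $M_i[u,v]$ agrees with whichever of $M_j[u,v]$, $M_\ell[u,v]$ is nonzero (the other being $0$), so $M_i[u,v] = M_j[u,v] + M_\ell[u,v]$.

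For the ``moreover'' clause, suppose a multiple of row $v$ has been added to some row $u$ while processing a node $j' \in \mathcal{T}_j$ (so $j' \le j$). Applying Lemma~\ref{tech_lemma0}(c) at step $j'$ (by the induction hypothesis) gives $v \in \pi_1(D_{j'}\setminus D_{j'-1}) \cup V_1(N_{j'})$. By Lemma~\ref{tech_lemma}(b), this forces $\mathcal{T}_{j'} \cap \mathcal{T}(v) \neq \emptyset$, and then Lemma~\ref{tech_lemma}(c) at step $j'$ gives $\mathcal{T}(v) \subseteq \mathcal{T}_{j'} \subseteq \mathcal{T}_j$. Since $\mathcal{T}_j \cap \mathcal{T}_\ell = \emptyset$, we obtain $\mathcal{T}(v) \cap \mathcal{T}_\ell = \emptyset$, and the definition of $M_\ell$ in~\eqref{defmi} immediately forces every entry in the row and column of $v$ of $M_\ell$ to be zero. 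The other direction is symmetric.

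The bulk of the work is Case C of the first part, where one has to carefully enumerate the possible locations of $\mathcal{T}(v)$ relative to $\mathcal{T}_j$ and $\mathcal{T}_\ell$ and argue that $M_i[u,v]$ matches the sum; everything else is either definitional (Cases A and B) or a direct invocation of the inductive hypotheses of the earlier technical lemmas (the ``moreover'' part). I expect the main obstacle to be simply the bookkeeping, since no new structural insight is needed beyond the connectedness of $\mathcal{T}(v)$ and the fact that the two branches below a join are disjoint.
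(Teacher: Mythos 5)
Your proof is correct and follows essentially the same strategy as the paper: a case analysis that reduces to observing $\mathcal{T}_j \cap \mathcal{T}_\ell = \emptyset$, plus the ``moreover'' clause via Lemma~\ref{tech_lemma0}(c) and Lemma~\ref{tech_lemma}(b),(c) applied at the processing node $j'$, which matches the paper's argument word for word. The one genuine variation is in the first part: to show that for $u \notin B_i$ the subtree $\mathcal{T}(u)$, if it meets $\mathcal{T}_i$, must lie entirely in one of $\mathcal{T}_j$ or $\mathcal{T}_\ell$, you appeal directly to the connectedness axiom of a tree decomposition and the fact that $i$ is a cut node of $\mathcal{T}$, whereas the paper routes this through the inductive Lemma~\ref{tech_lemma}(a) (to conclude $u \in V_1(N_j) \cup \pi_1(D_j)$) and then Lemma~\ref{tech_lemma}(c). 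Your version is slightly more elementary and self-contained since it does not invoke the inductive machinery for this particular deduction; the paper's version, by contrast, stays inside the vocabulary of the technical lemmas that are already being carried through the induction. Both are sound, and your case organization (by membership of $u,v$ in $B_i$ rather than by whether $\mathcal{T}(u)$ or $\mathcal{T}(v)$ meets $\mathcal{T}_i$) partitions the same territory.
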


\begin{proof} Fix $u,v \in V$. If $\mathcal{T}(u) \cap \mathcal{T}_i=\emptyset$, then we obviously have $\mathcal{T}(u) \cap \mathcal{T}_j=\emptyset$ and $\mathcal{T}(u) \cap \mathcal{T}_\ell=\emptyset$, and therefore $M_i[u,v]=0=0+0=M_j[u,v]+M_\ell[u,v]$. The same holds if $\mathcal{T}(v) \cap \mathcal{T}_i=\emptyset$. This conclusion is also reached for $\{u,v\} \subseteq B_i=B_j=B_\ell$.

Next assume that $\mathcal{T}(u) \cap \mathcal{T}_i \neq \emptyset$, $\mathcal{T}(v) \cap \mathcal{T}_i \neq \emptyset$ and $\{u,v\} \notin B_i$, implying $M_i [u,v] = m_{uv}$. Suppose without loss of generality that $u \notin B_i$. This means that $\mathcal{T}(u) \cap \mathcal{T}_j \neq \emptyset$ or $\mathcal{T}(u) \cap \mathcal{T}_\ell \neq \emptyset$.

If $\mathcal{T}(u) \cap \mathcal{T}_j \neq \emptyset$, given that $u \notin B_j$, Lemma~\ref{tech_lemma}(a) guarantees that $u \in V_1(N_j) \cup \pi_1(D_j)$, so that $\mathcal{T}(u)$ is a subtree of $\mathcal{T}_j$ by part (c) of the same lemma. Since $\mathcal{T}_j \cap \mathcal{T}_\ell=\emptyset$, we immediately deduce that $M_\ell[u,v]=0$. If $\mathcal{T}(v) \cap \mathcal{T}_j \neq \emptyset$, we get $M_i[u,v]=m_{uv}=M_j[u,v]=M_j[u,v]+M_\ell[u,v]$, as required. If $\mathcal{T}(v) \cap \mathcal{T}_j = \emptyset$, then $M_i[u,v]=m_{uv}$, while in addition to $M_\ell[u,v]$, also $M_j[u,v]=0$. However, since $\mathcal{T}(v) \cap \mathcal{T}_i \neq \emptyset$, with the same arguments above, we see that $\mathcal{T}(v)$ is a subtree of $\mathcal{T}_\ell$. This leads to $\mathcal{T}(u) \cap \mathcal{T}(v)=\emptyset$, so that $m_{uv}=0$ because no bag contains both $u$ and $v$.

If $\mathcal{T}(u) \cap \mathcal{T}_\ell \neq \emptyset$, the same conclusion may be achieved replacing $j$ by $\ell$ in the above argument, establishing the first part of the claim.

Next assume that a multiple of row $v$ has been added to another row $u$ while processing node $j'$ in $\mathcal{T}_j$. By Lemma~\ref{tech_lemma0}(c), $v \in \pi_1(D_{j'}\setminus D_{j'-1}) \cup V_1(N_{j'})$. By Lemma~\ref{tech_lemma}(b) and (c), $\mathcal{T}(v)$ is a subtree of  $\mathcal{T}_{j'}$ and hence of  $\mathcal{T}_{j}$. This means that $\mathcal{T}(v) \cap \mathcal{T}_\ell=\emptyset$, so that all entries in the row/column of $v$ in $M_\ell$ are zero by~\eqref{defmi}. The analogous conclusion is valid if multiple of row $v$ is added to a row $u$ while processing node $\ell'$ in $\mathcal{T}_\ell$.
\end{proof}

We are now ready to prove that Lemmas~\ref{tech_lemma0},~\ref{tech_lemma} and~\ref{tech_lemma2} still hold after step $i$ is performed.

We start with Lemma~\ref{tech_lemma0}. Part (a) is a simple consequence that $\tilde{M}_i$ is obtained from $M_i$ by a sequence of elementary row and column operations, where each row operation is followed by the same column operation. For part (b), note that step $i$ does not remove any elements from the global array, so $D_{i-1} \subseteq D_i$. On the other hand, it may append some pairs $(u,0)$ to the global array. For any such pair $(u,0)$, $u$ started the step as a row of type-i of the right child $r \in \{j,\ell\}$ of $i$, that is, $u \in V_1(N_r)$. By Lemma~\ref{tech_lemma}(d), $u \notin \pi_1(D_r) \cup V_2(N_r)= \pi_1(D_r) \cup B_i$. Mimicking the argument used in the proof of Lemma~\ref{tech_lemma}(d) for introduce nodes, we derive that $u \notin \pi_1(D_{i-1}\setminus D_r)$. This ensures that $\pi_1$ is injective over $D_{i}$. To see that part (c) holds, note that a multiple of a row $v$ may only be added to another row in operations of the form~\eqref{eq_join}, so that $v$ started the step in $V_1(N_j) \cup V_1(N_\ell)$. At the end of the step it is either in $D_i$ or $V_1(N_i)$.

Next, we consider Lemma~\ref{tech_lemma}. For part (a), let $v$ such that $\mathcal{T}(v) \cap \mathcal{T}_i \neq \emptyset$. Since $B_i=B_j=B_\ell$, this means that $\mathcal{T}(v) \cap \mathcal{T}_j \neq \emptyset$ or $\mathcal{T}(v) \cap \mathcal{T}_\ell \neq \emptyset$, so that $v \in V(N_j) \cup V(N_\ell) \cup \pi_1(D_\ell) \subset V(N_i) \cup \pi_1(D_i)$. Note that we are using Lemma~\ref{tech_lemma0}(b) in this argument. For part (b), fix $v \in \pi_1(D_i \setminus D_{i-1}) \cup V(N_i)$. By the description of \texttt{JoinBox}, we have $v \in V(N_j) \cup V(N_\ell)$, thus by the induction hypothesis $\mathcal{T}_j \cap \mathcal{T}(v) \neq \emptyset$ or $\mathcal{T}_\ell \cap \mathcal{T}(v) \neq \emptyset$. In either case, $\mathcal{T}_i \cap \mathcal{T}(v) \neq \emptyset$, as required.

As for introduce nodes, we prove (d) before (c). Since $V_1(N_i) \subset V_1(N_j) \cup V_1(N_\ell)$ and $V_2(N_i)=V_2(N_j)=V_2(N_\ell)$, the sets $V_1(N_i)$ and $V_2(N_i)$ are disjoint by the induction hypothesis. Induction also implies that $V(N_j)$ and $V(N_\ell)$ are disjoint from $D_j$ and $D_\ell$, respectively. The proof that $\pi_1(D_{i-1} \setminus D_j)$ and $\pi_1(D_{i-1} \setminus D_\ell)$ are disjoint from $V(N_j)$ and $V(N_\ell)$, respectively, is the same used for nodes of type introduce. Finally, it is clear that any rows and columns in $D_i \setminus D_{i-1}$ have been removed from $N_i^{\ast}$  to produce $N_i$, establishing (d). For part (c), fix $v \in \pi_1(D_{i}) \cup V_1(N_i)$ such that $\mathcal{T}_i \cap \mathcal{T}(v) \neq \emptyset$. By the description of \texttt{JoinBox} and arguments as above, we conclude that $v \in \pi_1(D_{\ell}) \cup V_1(N_j) \cup V_1(N_\ell)$, so that it is a subtree of $\mathcal{T}_j$ or $\mathcal{T}_\ell$ by the induction hypothesis, and hence of $\mathcal{T}_i$, as required.

We finally consider Lemma~\ref{tech_lemma2}. Before addressing each item individually, we explain the relationship between the entries of $N_i^\ast$ at the beginning of the step and a matrix that is congruent to $M_i$. The first part of Claim~\ref{claim:joinbox} tells us that $M_i=M_j+M_\ell$. The second part ensures that, if $\hat{M}_i$ denotes the matrix obtained from $M_i$ by performing the elementary row and column operations performed for nodes in $\mathcal{T}_j$ and $\mathcal{T}_\ell$, in the order in which they have been performed by the algorithm, we have $\hat{M}_i=\tilde{M}_j+\tilde{M}_\ell$. By Lemma~\ref{tech_lemma2}(a) applied at steps $j$ and $\ell$, we conclude that, for any $u,v \in B_i$, the entries $uv$ and $vu$ in $\hat{M}_i$ are equal to $N_j^{(2)}[u,v]+N_\ell^{(2)}[u,v]$.
Also, since $V_1(N_j) \cap V(N_\ell)=V_1(N_\ell) \cap V(N_j)=\emptyset$, Lemma~\ref{tech_lemma2}(b) applied at steps $j$ and $\ell$ ensures that any entries $uv$ and $vu$ of $\hat{M}_i$ where $u \in V_1(N_j)$ and $v \in V_1(N_\ell)$, or vice-versa, are equal to 0. Finally, if $u,v \in V(N_j)$ and $|\{u,v\} \cap B_i|\leq 1$, then the entries $uv$ and $vu$ in $\hat{M}_i$ and $\tilde{M}_j$ are the same by Claim~\ref{claim:joinbox}, and are equal to the corresponding entry in $N_j^{(0)}$, $N_j^{(1)}$ or $N_j^{(1)T}$ by Lemma~\ref{tech_lemma2}(b) applied at step $j$. The case $u,v \in V(N_\ell)$ and $|\{u,v\} \cap B_i|\leq 1$ is analogous. As a consequence, the entries that we see in~\eqref{eq:formofNast} are precisely the principal submatrix of $\hat{M}_i$ induced by rows in $V_1(N_j) \cup V_1(N_\ell) \cup B_i$.

Part (a) of Lemma~\ref{tech_lemma2} now follows for step $i$ because the operations performed during this step to $N_i^{\ast}$ (in order to produce $N_i$) and to $\hat{M}_i$ (in order to produce $\tilde{M}_i$) involve the same rows and columns, in the same order, so that all entries are affected in the same way.

For part (b), let $v \in V_1(N_i)$. Looking at the representation in~\eqref{eq:formofNast}, it is easy to see that operations in~\eqref{eq_join} do not modify the submatrix $\mathbf{0}_{(r+s)\times (r+s)}$ on the upper left corner. This shows that, if $u \in V_1(N_i)$, the entries $uv$ and $vu$ in $N_i$, and hence in $\tilde{M}_i$, must be 0. The same argument of part (a) ensures that, if $u \in B_i$, the entries $uv$ and $vu$ in $N_i$ and $\tilde{M}_i$ coincide. Finally, assume that $u \notin V(N_i)$. If $u \in V_1(N_j) \cup V_1(N_\ell)$, then $(u,0) \in D_i\setminus D_{i-1}$, and the entries $uv$ and $vu$ are again 0 because the same operations have been performed on $N_i^\ast$ and $\hat{M}_i$. To conclude the proof of part (b), consider the case $u \in V \setminus (V(N_j)\cup V(N_\ell))$. We claim that the entries $uv$ and $vu$ in $\tilde{M}_j$ and $\tilde{M}_\ell$ are 0. Since $u \in V \setminus (V(N_j)\cup V(N_\ell))$, either $\mathcal{T}(u) \cap \mathcal{T}_j\neq \emptyset$ and $u \in \pi_1(D_j)$, or $\mathcal{T}(u) \cap \mathcal{T}_j=\emptyset$ by Lemma~\ref{tech_lemma}(a) and (b). In the former case, the entries $uv$ and $vu$ in $\tilde{M}_j$ are 0 because of Lemma~\ref{tech_lemma2}(c) (applied to $u$). In the latter case, they are 0 because of Lemma~\ref{tech_lemma2}(d). Of course, the same arguments may be applied to $\tilde{M}_\ell$, establishing the claim. At this point, we know that the entries $uv$ and $vu$ in $\hat{M}_i$ are 0. This implies that no row and column operation performed by~\texttt{JoinBox} (which are the operations that turn $\hat{M}_i$ into $\tilde{M}_i$) can modify these entries, as they can only be modified when a multiple of some row or column $w$ is added to row or column $v$, where $w$ has type-i in $N_j$ or $N_\ell$. In other words, for entry $uv$ to become nonzero, entry $uw$ must have become nonzero earlier for another $w \in V_1(N_j) \cup V_1(N_\ell)$, and this never happens because they are all zero in $\hat{M}_i$.

We now consider part (c), and fix $(v,d_v) \in D_i$, where $\mathcal{T}_i \cap \mathcal{T}(v)\neq \emptyset$. First suppose that $v$ was diagonalized during step $i$, that is,  $(v,d_v) \in D_i \setminus D_{i-1}$, then $v$ started as a type-i row in $\hat{M}_i$ and its row and column became 0 after the operations of~\texttt{JoinBox} where performed. By the discussion in the proof of (b), any entries $uv$ and $vu$ of $\hat{M}_i$ such that $u \notin V(N_j) \cup V(N_\ell)$ are equal to 0, and they cannot be modified by the operations that turn $\hat{M}_i$ into $\tilde{M}_i$. If $u \in V(N_j) \cup V(N_\ell)$, the entries $uv$ and $vu$ in the process of obtaining $\tilde{M}_i$ are precisely the entries that we see as we modify $N_i^\ast$, and so they are 0 in $\tilde{M}_i$ because  they became 0 in $N_i^\ast$. Next suppose that $v \in D_{i-1}$. The argument to prove Lemma~\ref{tech_lemma}(d) for introduce nodes can be used to show that either $v \in D_j$ and $\mathcal{T}(v)$ is a subtree of $\mathcal{T}_j$ or that $v \in D_\ell$ and $ \mathcal{T}(v)$ is a subtree of $\mathcal{T}_\ell$. We can easily apply Lemma~\ref{tech_lemma2}(c) to step $j$ or $\ell$, and use the structure of $\hat{M}_i$ described above to conclude that the row and column associated with $v$ is diagonalized with diagonal entry $d_v$ in $\hat{M}_i$. Using the arguments of part (b), this cannot be modified when producing $\tilde{M}_i$.

We conclude with part (d). Let $v$ be such that $\mathcal{T}_i \cap \mathcal{T}(v)=\emptyset$. As $\mathcal{T}_j$ and $\mathcal{T}_\ell$ are contained in $\mathcal{T}_i$, we may apply Lemma~\ref{tech_lemma2}(d) to steps $j$ and $\ell$, and use the structure of $\hat{M}_i$ described above to conclude that the row and column associated with $v$ in $\hat{M_i}$ are 0. Since $v \notin V(N_j) \cup V(N_\ell)$, this cannot be modified when producing $\tilde{M}_i$.

\vspace{10pt}

\noindent \textbf{Case 4 (Forget).} If $i$ has type Forget, let $j$ be its child and call $v^\ast$ the vertex in $B_j \setminus B_i$. Step $i$ starts with a matrix $N_i^\ast$ that is obtained from $N_j$ by updating all entries $uv^\ast$ and $v^\ast u$ such that $u \in B_j$.

In the following, let $M_{v^\ast}$ be the matrix whose rows and columns are labeled by $V$ such that all entries are 0 except for $M_{v^\ast}[u,v^\ast]=M_{v^\ast}[v^\ast,u]=m_{uv^\ast}$, where $u \in B_j$.
\begin{claim}\label{claim:forgetbox}
The matrix $M_i$ in~\eqref{defmi} satisfies $M_i=M_j+M_{v^\ast}$, and the row and column of $v^\ast$ in $M_i$ and $M$ coincide.
\end{claim}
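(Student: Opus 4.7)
The plan is to verify both statements by a direct case analysis on pairs $(u,v) \in V \times V$, unfolding~\eqref{defmi} and exploiting the structure of a forget node. The structural facts I will lean on are: (i) $B_i = B_j \setminus \{v^\ast\}$, so membership in $B_i$ and $B_j$ coincides for every $u \neq v^\ast$; (ii) as a consequence, for any $u \neq v^\ast$ one has $\mathcal{T}(u) \cap \mathcal{T}_i \neq \emptyset$ if and only if $\mathcal{T}(u) \cap \mathcal{T}_j \neq \emptyset$, since $\mathcal{T}_i = \mathcal{T}_j \cup \{i\}$ and $u \in B_i$ forces $u \in B_j$; and (iii) $\mathcal{T}(v^\ast)$ is a subtree of $\mathcal{T}_j$, because the node of $\mathcal{T}$ closest to the root whose bag contains $v^\ast$ is precisely $j$.

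For the identity $M_i = M_j + M_{v^\ast}$, I would first dispatch all entries with $u, v \neq v^\ast$: observations (i) and (ii) make the defining conditions of $M_i[u,v]$ and $M_j[u,v]$ equivalent, and $M_{v^\ast}[u,v] = 0$ by construction, so the identity is immediate. For entries in the row (or, symmetrically, the column) of $v^\ast$, I would fix $v = v^\ast$ and split into three subcases according to whether $u \in B_j$, or $u \notin B_j$ but $\mathcal{T}(u) \cap \mathcal{T}_j \neq \emptyset$, or $\mathcal{T}(u) \cap \mathcal{T}_j = \emptyset$. In each subcase, at most one of $M_j[u,v^\ast]$ and $M_{v^\ast}[u,v^\ast]$ is nonzero (and equal to $m_{uv^\ast}$), and this matches $M_i[u,v^\ast]$: since $v^\ast \notin B_i$ and $\mathcal{T}(v^\ast) \cap \mathcal{T}_i \ni j$, the only condition left to check for $M_i[u,v^\ast]$ is $\mathcal{T}(u) \cap \mathcal{T}_i \neq \emptyset$.

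For the second statement I would argue that $M_i[u,v^\ast] = m_{uv^\ast}$ for every $u \in V$, with the column case handled by symmetry. The only nontrivial point is to rule out a spurious zero: if $\mathcal{T}(u) \cap \mathcal{T}_i = \emptyset$ then by~\eqref{defmi} we have $M_i[u,v^\ast] = 0$, and I must check that $m_{uv^\ast} = 0$ as well. Here (iii) delivers the punchline: $\mathcal{T}(v^\ast) \subseteq \mathcal{T}_i$ is disjoint from $\mathcal{T}(u)$, so no bag contains both $u$ and $v^\ast$, and the edge axiom of a tree decomposition forces $\{u,v^\ast\} \notin E(G)$, i.e., $m_{uv^\ast} = 0$.

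The mildly delicate step I anticipate is the diagonal position $u = v = v^\ast$: since $v^\ast \in B_j$, the condition $\{v^\ast\} \subseteq B_j$ makes $M_j[v^\ast,v^\ast] = 0$, and the identity is saved only because the definition of $M_{v^\ast}$ permits $u = v^\ast \in B_j$, thereby placing $m_{v^\ast v^\ast}$ at position $(v^\ast,v^\ast)$. Apart from tracking this corner case, the argument reduces to a routine unpacking of~\eqref{defmi}.
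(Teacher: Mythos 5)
Your proposal is correct and follows essentially the same route as the paper's own proof: both unpack the definition in~\eqref{defmi} by case analysis, both exploit $B_i = B_j \setminus \{v^\ast\}$ to show $M_i$ and $M_j$ agree off the row/column of $v^\ast$, and both use $\mathcal{T}(v^\ast) \subseteq \mathcal{T}_j$ plus the edge axiom to handle the second statement. The only cosmetic difference is the organization of subcases (the paper isolates the single exceptional case $\{u,v\}\subseteq B_j$, $\{u,v\}\not\subseteq B_i$ in one stroke, while you split on membership of $u$ in $B_j$ and on $\mathcal{T}(u)\cap\mathcal{T}_j$), and your explicit remark about the diagonal entry $u=v=v^\ast$ is a valid observation that the paper handles implicitly.
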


\begin{proof}
Because $B_i \subset B_j$, any vertex $u$ that satisfies $\mathcal{T}(u) \cap \mathcal{T}_i \neq \emptyset$ also satisfies $\mathcal{T}(u) \cap \mathcal{T}_j \neq \emptyset$  By the definition of $M_i$, we have $M_i[u,v]=M_j[u,v]$ and  $M_{v^\ast}[u,v]=0$ except when $\{u,v\} \subseteq B_j$, but $\{u,v\} \not\subseteq B_i$. The latter happens if $\{u,v\} \subseteq B_j$ and $v^\ast \in \{u,v\}$. In this situation, we have $M_i[u,v]=m_{uv}$, $M_j[u,v]=0$ and $M_{v^\ast}[u,v]=m_{uv}$, as required.

For the second claim, fix $u \in V$. First assume that $\{u,v^\ast\} \in B_\ell$ for some node $\ell$. Since $i$ forget $v^\ast$, $\ell$ is a node of $\mathcal{T}_j$ and therefore $\mathcal{T}(u) \cap \mathcal{T}_i \neq \emptyset$, $\mathcal{T}(v^\ast) \cap \mathcal{T}_i \neq \emptyset$ and $\{u,v^\ast\} \not\subseteq B_i$. By definition of $M_i$, $M_i[u,v^\ast]=M_i[v^\ast,u]=m_{uv^\ast}$, as required.

Next assume that $\{u,v^\ast\} \notin B_\ell$ for all $\ell \in V(\mathcal{T})$. By the definition of tree decomposition, $m_{uv^\ast}=m_{v^\ast u}=0$, and therefore   $M_i[u,v^\ast]=M_i[v^\ast,u]=m_{uv^\ast}$ regardless of the case that applies to $u,v^\ast$ in the definition of $M_i$.
\end{proof}

As in previous cases, we prove that Lemmas~\ref{tech_lemma0},~\ref{tech_lemma} and~\ref{tech_lemma2} still hold after step $i$ is performed.

We start with Lemma~\ref{tech_lemma0}. Part (a) again follows from the fact that $\tilde{M}_i$ is obtained from $M_i$ by a sequence of elementary row and column operations, where each row operation is followed by the same column operation. For part (b), note that step $i$ does not remove any elements from the global array, so $D_{i-1} \subseteq D_i$. On the other hand, it may append one or two pairs $(u,d_u)$ to the global array. For any such pair $(u,d_u)$, $u=v^\ast$ or $u \in V_1(N_j)$, so that $u \in V(N_j)$, which is disjoint from $\pi_1(D_j)$ by Lemma~\ref{tech_lemma}(d) applied to step $j$. As seen in the proof of Lemma~\ref{tech_lemma}(d) for introduce nodes, $u$ cannot lie in $\phi(D_{i-1} \setminus D_j)$, and $\pi_1$ is injective over $D_{i}$. To see that part (c) holds, note that a multiple of a row $v$ may only be added to another row in operations of \texttt{ForgetBox} if $v=v^\ast$ or $v \in V_1(N_j)$. This ensures that, at the end of the step, $v \in \pi_1(D_i) \cup V_1(N_i)$.

Next, we consider Lemma~\ref{tech_lemma}. For part (a), let $v$ such that $\mathcal{T}(v) \cap \mathcal{T}_i \neq \emptyset$. Since $B_i \subset B_j$, we have  $\mathcal{T}(v) \cap \mathcal{T}_j \neq \emptyset$. By Lemma~\ref{tech_lemma}(a) applied to step $j$, $v \in V(N_j) \cup \pi_1(D_j)$. We have seen that $D_j \subseteq D_i$ in Lemma~\ref{tech_lemma0}(b). If $v \in V(N_j)$, the description of \texttt{ForgetBox} ensures that $v \in \pi_1(D_i \setminus D_{i-1}) \cup V(N_i)$, and the result holds.  Next we prove (d). Applying Lemma~\ref{tech_lemma}(d) to step $j$, we know that $V_1(N_j)$, $V_2(N_j)$ and $\pi_1(D_j)$ are mutually disjoint. As we argued above, no element in  $\phi(D_{i-1} \setminus D_j)$ can lie in $V(N_i) \subset V(N_j)$. The only element in $V_2(N_j)\setminus V_2(N_i)$ is $v^\ast$, and this element is either in $\pi_1(D_i)$ or in $V_1(N_i)$ at the end of the step. Also, no element $v^\ast$ lies in $V_1(N_i)\setminus V_1(N_j)$ and all elements in $V_1(N_j)\setminus V_1(N_i)$ are added to $\pi_1(D_i)$.  As a consequence, $V_1(N_i)$, $V_2(N_i)$ and $\pi_1(D_i)$ are mutually disjoint, as required for part (d). For part (c), fix $v \in \pi_1(D_{i}) \cup V_1(N_i)$ such that $\mathcal{T}_i \cap \mathcal{T}(v) \neq \emptyset$. By the description of \texttt{ForgetBox} and arguments as above, we conclude that $v \in \pi_1(D_j) \cup V_1(N_j) \cup \{v^\ast\}$. If $v \neq v^\ast$, Lemma~\ref{tech_lemma}(c) applied to step $j$ tells us that $\mathcal{T}(v)$ is a subtree of $\mathcal{T}_j$, and hence of $\mathcal{T}_i$. If $v=v^\ast$, then $\mathcal{T}(v)$ is a subtree of $\mathcal{T}_j$ because $j$ is the child of the node that forgets $i$. This concludes the proof.

To prove Lemma~\ref{tech_lemma2}, we first explain the relationship between the entries of $N_i^\ast$ at the beginning of the step and a matrix that is congruent to $M_i$. The first part of Claim~\ref{claim:forgetbox} tells us that $M_i=M_j+M_{v^\ast}$, so that the only entries that have been modified from $M_j$ to $M_i$ are in positions $uv$ where $u,v \in B_j$. By Lemma~\ref{tech_lemma0}(c), if the row (or column) associated with $w$ has been added to another row (or column) while processing node $\ell$ in $\mathcal{T}_j$, then this row has type-i in $N_\ell$ or has been diagonalized at step $\ell$. By Lemma~\ref{tech_lemma}(b) and (c), we conclude that $\mathcal{T}(w)$ is contained in $\mathcal{T}_j$, and by Lemma~\ref{tech_lemma}(d), $w \notin B_j$. In other words, rows and columns that are used to modify other rows and columns are the same in $M_i$ and $M_j$. This means that performing the same row and column operations to $M_j$ and $M_i$ changes entries in exactly the same ways. Let $\hat{M}_i$ denote the matrix obtained from $M_i$ by performing the elementary row and column operations performed for nodes in $\mathcal{T}_j$. Our discussion ensures that
\begin{equation}\label{eq:forgetbox}
\hat{M}_i=\tilde{M}_j+M_{v^\ast}.
\end{equation}
In particular, we derive the following by Lemma~\ref{tech_lemma2} applied at step $j$:
\begin{itemize}
\item[(I)] If $u,v \in B_j$, then $N_i^\ast[u,v]=\hat{M}_i[u,v]$.

\item[(II)] If $v \in V_1(N_j)$, then the following holds for any $u \in V$. If $u \in B_j$, $N_i^\ast[u,v]=N_i^\ast[v,u]=\hat{M}_i[u,v]=\hat{M}_i[v,u]$. If $u \notin B_j$, then $\hat{M}_i[u,v]=\hat{M}_i[v,u]=0$.

\item[(III)] If $(v,d_v) \in D_j$ and $\mathcal{T}_j \cap \mathcal{T}(v) \neq \emptyset$, then $\hat{M}_i[v,v]=d_v$ and $\hat{M}_i[u,v]=\hat{M}_i[v,u]=0$ for $u \in V \setminus \{v\}$.

\item[(IV)] If $\mathcal{T}_j \cap \mathcal{T}(v) = \emptyset$, then $\hat{M}_i[u,v]=\hat{M}_i[v,u]=0$ for all $u \in V$.
\end{itemize}

We also claim that
\begin{itemize}
\item[(V)] If $u \notin V(N_j)$, then $\hat{M}_i[u,v^\ast]=\hat{M}_i[v^\ast,u]=0$.
\end{itemize}
Indeed, fix $u \notin V(N_j)$. The row and column of $u$ are the same in $M_j$ and $M_i$ by Claim~\ref{claim:forgetbox}. If $\mathcal{T}(u) \cap \mathcal{T}_j =\emptyset$, then the row and column associated with $u$ in $\tilde{M_j}$ is 0 by Lemma~\ref{tech_lemma2}(d). So $\hat{M}_i[u,v^\ast]=\hat{M}_i[v^\ast,u]=0$ by~\eqref{eq:forgetbox}. If $\mathcal{T}(u) \cap \mathcal{T}_j \neq \emptyset$, then $u \in V(N_j) \cup \pi_1(D_j)$ by Lemma~\ref{tech_lemma}(a). Since $u \notin V(N_j)$, we get $u \in \pi_1(D_j)$, so that the row and column associated with $u$ in $\tilde{M_j}$ is 0 except possibly for the diagonal element, which is equal to $\pi_2(\pi^{-1}(u))=d_u$. Again, $\hat{M}_i[u,v^\ast]=\hat{M}_i[v^\ast,u]=0$ by~\eqref{eq:forgetbox}.

We use (I)-(V) to prove Lemma~\ref{tech_lemma2}. We start with (a). For any $u,v \in B_i$, $N_i^\ast$ and $\hat{M}_i$ are the same for entries $u'v'$ such that $u',v' \in V(N_j)$. Since the same row and column operations are performed to produce $N_i$ and $\tilde{M}_i$ from $N_i^\ast$ and $\hat{M}_i$, respectively, we get $N_i[u,v]=\tilde{M}_i[u,v]$.

To prove Lemma~\ref{tech_lemma2}(b), fix $v \in V_1(N_i)$. If $u \in V(N_i)$, then $N_i[u,v]=N_i[v,u]=\tilde{M}_i[u,v]=\tilde{M}_i[v,u]$ for the same reason of part (a), namely that the same row and column operations are performed to $N_i^\ast$ and $\hat{M}_i$. Next assume that $u \notin V(N_i)$. The cases $u \in V(N_j)$ and $u \notin V(N_j)$ will be considered separately. We start with the latter. The entries $uv$ and $vu$ of any matrix produced while getting from $\hat{M}_i$ to $\tilde{M}_i$ can only be modified by adding a row or column to row or column $v$. During \texttt{ForgetBox}, this could be an operation as in~\eqref{eq:1c} (in the case $v=v^\ast$) or as in~\eqref{eq_case21} (in the case $v \neq v^\ast$). In both cases, the row or column being added is associated with some $u' \in V_1(N_j)$, and the entries $uu'$ and $u'u$ are 0 by item (II) above\footnote{Of course, (II) refers only to matrix $\hat{M}_i$ and not to intermediate matrices in the process of getting from $\hat{M}_i$ to $\tilde{M}_i$. However, it is easy to prove that entries of this type are always equal to 0 using an argument as in the proof of Lemma~\ref{tech_lemma2}(b) for \texttt{JoinBox}.}. This argument implies that row and column operations of  \texttt{ForgetBox} cannot modify entries of $\hat{M}_i$ that lie ``outside the box'' $N_j$, that is, entries $u'v'$ such that $u'\notin V(N_j)$ or $v'\notin V(N_j)$.

To conclude part (b), assume that $u \in V(N_j) \setminus V(N_i)$. This can happen if $u \neq v^\ast$, we are in Case 2 and $u$ is the vertex associated with the rightmost entry in $x_{v^\ast}$ or if $u=v^\ast$ and we are in any case other than Subcase 1(c), when $v$ is inserted in $N_i^{(1)}$. In both cases, the conclusion will follow from our proof of part (c), as we shall prove that the row and column associated with $u$ has been diagonalized at step $i$, so that the entries $uv$ and $vu$ mentioned here must be 0.

 For part (c), fix $(v,d_v) \in D_i$ such that $\mathcal{T}_i \cap \mathcal{T}(v) \neq \emptyset$. By our argument to prove Lemma~\ref{tech_lemma2}(c) for introduce nodes, we know that $(v,d_v) \notin D_{i-1}\setminus D_j$, so that row $v$ has either been diagonalized at step $i$ or it had already been diagonalized at step $j$. First suppose that $(v,d_v)\in D_i \setminus D_{i-1}$, which means that either $v=u$, the vertex associated with the rightmost entry in $x_{v^\ast}$ or $v=v^\ast$ and we are in any case other than Subcase 1(c), when $v$ is inserted in $N_i^{(1)}$. We start with $v=u$. The operations performed for case 2 make sure that $u$ is diagonalized within $N_i^\ast$, and therefore the entry $uu$ is given by $d_u$ and the other entries become 0.  The relationship between $N_i^\ast$ and $\hat{M}_i$ ensures that $\hat{M}_i[u,u]=d_u$ and that $\hat{M}_i[u,w]=\hat{M}_i[w,u]=0$ for $w \in V(N_j) \setminus \{u\}$. If $w \in V\setminus V(N_j)$ then the entries $uw$ and $wu$ are 0 in $\hat{M}_i$. As argued in part (c), these entries cannot be modified when a row or column associated with a vertex in $V_1(N_j)$ is added to row $u$. Moreover, by (V) above, the entries $v^\ast w$ and $wv^ast$ in $\hat{M}_i$ are 0, so that the entries $uw$ and $wu$ cannot be modified if a multiple of row or column $v^\ast$ is added to row $u$. A similar argument deals with the case $v=v^\ast$. To finish the proof of part (c), suppose that $(v,d_v) \in D_j$. This means that $v \notin V(N_j)$ and that the row and column associated with $v$ in $\tilde{M}_j$, and therefore in $\hat{M}_i$, is 0, except possibly for the entry $\hat{M}_i[v,v]=d_v$. No row or column is added to row or column $v$ during \texttt{ForgetBox}, and an entry $wv$ or $vw$ cannot be modified by the addition of a row or column to the row or column associated with $w$ by the arguments of the previous case.

For part (d), let $v$ be such that $\mathcal{T}_i \cap \mathcal{T}(v)=\emptyset$, so that $\mathcal{T}_j \cap \mathcal{T}(v)=\emptyset$. By (IV) and the arguments above, we easily conclude that the row and column associated with $v$ in $\hat{M}_i$ are 0, and that they cannot be modified during row and column operations performed by \texttt{ForgetBox}, as these entries are outside the box.

This concludes the induction and shows that Lemma~\ref{tech_lemma0}, Lemma~\ref{tech_lemma} and Lemma~\ref{tech_lemma2} hold throughout the algorithm. As argued above, this shows that \texttt{CongruentDiagonal} produces the correct output.

\section{Running-time of the algorithm} \label{sec:analysis}

To finish the paper, we show that the running-time of \texttt{CongruentDiagonal} given in Theorem~\ref{main_theorem} is correct. If the algorithm is given a tree decomposition $\mathcal{T}$ with $m$ nodes and width $k$ of the underlying graph $G$ of matrix $M$, it first computes a nice tree decomposition $\mathcal{T}'$ of the same width $k$ with fewer than $4n-1$ nodes in time $O(k(m+n))$ by Lemma~\ref{lemma_nice:tree}. The proof of Lemma~\ref{lemma_nice:tree} implies that the number of nodes of each type in $\mathcal{T}'$ is at most $n$.

We consider the number of operations performed at each type of node. \texttt{LeafBox} initializes a matrix in $O(k^2)$ trivial steps. \texttt{IntroduceBox} uses $O(k)$ steps to create a row and column filled with zeros.

For \texttt{JoinBox} and \texttt{ForgetBox}, the main cost comes from row and column operations. As the matrices involved have dimension at most $k+1$ each such row or column operation requires $O(k)$ sums and multiplications. Regarding \texttt{ForgetBox}, when $v$ is forgotten, either $v$ is turned into a type-i vertex, or its row and column, and possibly the row and column of another vertex $u$, are diagonalized. The latter requires $O(k)$ row and column operations. If $v$ is turned into a type-i vertex, then inserting it into the matrix $N_i^{(1)}$ in row echelon form takes at most $k+1$ row operations. \texttt{JoinBox} can be most time consuming. To insert just one row vector into a matrix in row echelon form with $k+1$ rows, and preserving this property by adding multiples of one vector to another, can require up to $k+1$ row operations. Recall that each such operation can be done in time $O(k)$. To combine two matrices in row echelon form into one such matrix, up to $k+1$ row vectors are inserted. Thus the total time for this operation is $O(k^3)$. This would immediately result in an upper bound of $O(km+k^3 n)$ for the whole computation, where the first term can be omitted if we assume that we already start with a tree decomposition with $O(kn)$ nodes for which the vertices in each bag are sorted.

To realize that $O(k^3 n)$ may be replaced by $O(k^2 n)$, we have to employ a different accounting scheme for the time spent to keep $N_i^{(1)}$ in row echelon form in all applications of \texttt{JoinBox} and \texttt{ForgetBox}. This requires the input graph $G$ to be labeled in a convenient way in terms of the nice tree decomposition $\mathcal{T}$, as will be described below. We first recall some facts about the algorithm:
\begin{itemize}
\item[(a)] The columns of $N_i^{(1)}$ are labeled by type-ii vertices, namely elements of $B_i$, which are always in increasing order. 

\item[(b)] If a new type-ii vertex appears (which may only happen at a node $i$ of type Introduce), the column corresponding to it in $N_i^{(1)}$ is a zero column, meaning that it creates no pivots. 

\item[(c)] If a type-ii row $v$ is forgotten at step $i$ (and $j$ is the child of $i$), a column is removed from $N_j^{(1)}$ in the process of defining $N_i^{(1)}$. By the description of \texttt{ForgetBox} if a type-i row $u$ had a pivot in $N_j^{(1)}$ at the column indexed by $v$, then both $u$ and $v$ are diagonalized at step $i$ and no other pivots change in this step.
\end{itemize}

For nodes of type Forget, we count the row operations performed to insert a new row into $N_i^{(1)}$ and to preserve it in row echelon form in terms of the vertex $v$ being forgotten. As mentioned above, inserting $v$ takes $O(k)$ row and column operations, each requiring $O(k)$ field operations. Overall, the number of such operations is $O(k^2n)$. 

For nodes of type Join, we count the row operations in terms of the vertex that indexes the row $v$ being modified by the operation. This operation creates at least one 0 entry in row $v$, meaning that either the type-ii label of its pivot increases or the row becomes a zero row. Of course, row $v$  becomes a zero row at most once, as it becomes diagonalized when this happens. We claim that, if the vertices of the input graph are conveniently labeled, the label of the pivot of row $v$ cannot increase more than $k$ times. If this holds, a vertex $v$ can be involved in at most $O(k)$ row operations in nodes of type Join for the entire algorithm, each requiring $O(k)$ field operations. Overall, the number of such operations is $O(k^2n)$, giving the desired result. 

To conclude the proof, we establish that the vertex set of $G$ may indeed be labeled so that the label of the pivot of any type-i row $v$ cannot increase more than $k$ times. Given the input graph $G$ and an input nice tree decomposition $\mathcal{T}$ with $O(n)$ nodes, we first modify $\mathcal{T}$ to obtain a nice tree decomposition with the additional properties below, which may be done in time $O(kn)$ with the procedure described in our proof of Lemma~\ref{lemma_nice:tree} (see appendix):
\begin{itemize}
\item[(i)] If $i$ and $j$ are both nodes of type Join and $i$ is an ancestor of $j$, then $|B_i| \leq |B_j|$.

\item[(ii)] If $j$ is a node of type Join and $i$ is the first node of type Join on the path between $j$ and the root, then the path between $j$ and $i$ is either a single edge, a path for which the internal nodes have type Forget, or a path whose internal nodes first alternate between types Forget and Introduce (starting with Forget), and are then possibly followed by a sequence of nodes of type Forget.
\end{itemize}  

Next, given the bottom-up labeling of $\mathcal{T}$, we relabel the vertex set of $G$ so that the first vertex that is forgotten (by a node of type Forget) is labeled $n$, the second is labeled $n-1$, and so on. Let $i$ be a node of type Forget and $j$ be its child. Observe that, with this labeling, the last column of $N_j^{(1)}$ is deleted in the course of producing $N_i^{(1)}$.

Assuming that the vertex set of $G$ is labeled in this way, we now consider a vertex $v$ that became a type-i row at step $i$. For any node $\ell$ for which $v$ has type-i, we say that $v$ has \emph{distance $d$ to the end of the box $N_\ell$} if the row of $v$ in $N_\ell^{(1)}$ has $d$ entries after its pivot. Since the width of the decomposition is $k$, $N_\ell^{(1)}$ has at most $k+1$ columns, so that the distance of $v$ to the end of any box is always bounded above by $k$. This is true for the first node $\ell_1$ of type Join such that $v$ has type-i. Let $d_0$ be the distance from $v$ to the end of the box at the start of step $\ell_1$ (see~\eqref{eq:formofNast}). During this step, each time $v$ is modified by a row operation, the label of its pivot increases by at least one. In particular, $v$ is involved in at most $d_0 \leq k$ operations if its row becomes a zero row. Suppose that it does not become a zero row and let $d_1$ be the distance from $v$ to the end of the box $N_{\ell_1}^{(1)}$. Clearly, $1\leq d_1 \leq d_0-a_1$, where $a_1$ is the number of row operations that have been performed upon $v$ during step $\ell_1$. If $v$ does not appear in another node of type Join, we are done because $a_1 \leq d_0-1< k$. So assume that it appears in another node of type Join, let $\ell_2$ be the first such node with respect to the bottom-up ordering. Along the path between $\ell_1$ and $\ell_2$, the distance from $v$ to the end of the corresponding box \emph{may increase} by one for each node of type Introduce (because it creates a new column that may appear after the pivot of row $i$), and \emph{always decreases} by one for each node of type Forget (because the vertex labeling ensures that the column corresponding to the forgotten vertex is always the last row of the box). Since there are at least as many nodes of type Forget as nodes of type Introduce along the path between $\ell_1$ and $\ell_2$, when step $\ell_2$ starts, the distance from $v$ to the end of the box is at most $d_1$. As before, if the row of $v$ becomes a zero row during this step, it may be involved in at most $d_1$ row operations, and the overall number of operations involving $v$ in nodes of type Join is at most $d_1+a_1 \leq d_0 \leq k$. Otherwise, its distance $d_2$ to the end of the box $N_{\ell_2}$ satisfies $1 \leq d_2 \leq d_1-a_2 \leq d_0-(a_1+a_2)$, where $a_2$ is the number of row operations that have been performed upon $v$ during this step. This implies that $a_1+a_2 \leq d_0-1$. An inductive application of this argument implies that the claim holds.

To conclude the paper, we note that the algorithm works for graphs with arbitrary vertex labelings associated with arbitrary nice tree decompositions whose roots have an empty bag. Using a special ordering and a special nice tree decomposition is only needed to bound the number of operations. However, even though we can craft examples where the pivot of a single row can move more than $k$ times, we are not aware of an example where more than $O(k^2 n)$ field operation are performed overall for nodes of type \texttt{JoinBox}.  In fact, in most practical applications, the set of type-i rows tends to remain quite small throughout the algorithm.   

\bibliographystyle{abbrv}
\bibliography{paper_treewidth}

\appendix

\section{Proof of Lemma~\ref{lemma_nice:tree}}

For completeness, we provide a proof of Lemma~\ref{lemma_nice:tree}.

Let $\mathcal{T}$ be a tree decomposition of $G$ with width $k$ with $m$ nodes, and fix an arbitrary node $i$ as its root. At the beginning, we sort the vertices in each bag in increasing order. This could trivially be done in time $O((k \log k)m)$ for any pre-determined order, but with our assumption that $V=\{1,\dots,n\}$, this can be done in time $O(km)$ by first producing a list of tree nodes in whose bags a vertex $v$ appears, simultaneously for every vertex $v$. We modify the tree decomposition in a sequence of depth-first traversals. In the first traversal, every node whose bag is contained in the bag of its parent is merged with the parent, that is, the node is removed and its children are connected to the parent.
\begin{figure}[t]
\begin{tikzpicture}
  [scale=1,auto=left,every node/.style={circle,scale=1}]
  \node[draw,circle,fill=blue!25, inner sep=1.5] (a) at (-3.5,0) {$13$};
  \node[draw,circle,fill=blue!25, inner sep=1.5] (b) at (-2,1) {$24$};
    \node[draw,circle,fill=blue!10, inner sep=1.5] (d) at (-2,-1) {$12$};
    \node[draw,circle,fill=blue!10, inner sep=1.5] (c) at (-.5,1.5) {$34$};
    \node[draw,circle,fill=blue!10, inner sep=1.5] (cc) at (-.5,.5) {$14$};

  \node[draw,circle,fill=blue!25, inner sep=1.5] (e) at (2.5,0) {$13$};
  \node[draw,circle,fill=blue!25, inner sep=1.5] (f) at (4,-1) {$13$};
   \node[draw,circle,fill=blue!25, inner sep=1.5] (g) at (4,1) {$13$};
   \node[draw,circle,fill=blue!10, inner sep=1.5] (j) at (5.5,-1) {$12$};

   \node[draw,circle,fill=blue!25, inner sep=1.5] (h) at (5.5,1) {$24$};
   \node[draw,circle,fill=blue!25, inner sep=1.5] (i) at (7,2) {$24$};
   \node[draw,circle,fill=blue!25, inner sep=1.5] (k) at (7,0) {$24$};
   \node[draw,circle,fill=blue!10, inner sep=1.5] (l) at (8.5,2) {$34$};
   \node[draw,circle,fill=blue!10, inner sep=1.5] (ll) at (8.5,0) {$14$};
   \node (m) at (.5,.5) {$\longrightarrow$};
   \node (n) at (-4.25,0) {};
   \node (o) at (1.75,0) {};
  \path (a) edge node[below]{} (b)
          (b) edge node[below]{} (c)
          (b) edge node[below]{} (cc)
        (a) edge node[below]{} (d)
        (e) edge node[below]{} (f)
        (e) edge node[below]{} (g)
        (f) edge node[below]{} (j)
        (h) edge node[below]{} (i)
        (h) edge node[below]{} (k)
        (g) edge node[below]{} (h)
        (l) edge node[below]{} (i)
        (h) edge node[below]{} (i)
        (ll) edge node[below]{} (k)
        (a) edge node[below]{} (n)
        (o) edge node[below]{} (e);
        ;
\end{tikzpicture}
\caption{Third traversal, where nodes are replaced by a binary tree. The root of the decomposition is located towards the left.}
     \label{fig-tr3}
\end{figure}
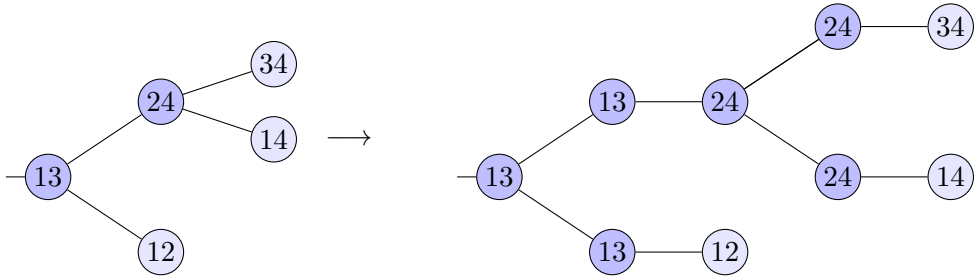

In the second traversal, whenever the bag of a node has size less than the size of the bag of its parent, we add some vertices from the parent's bag until both bags have the same size. At this point, the bags of children are at least as large as the bags of their parents, and they are never contained in the bag of their parent. In the third traversal, each node $i$ with $c \geq 2$ children and bag $B_i$ is replaced by a binary tree with exactly $c$ leaves whose nodes are all assigned the bag $B_i$. Each child of $i$ in the original tree becomes the single child of one of the leaves of this binary tree. At this point, all nodes have at most two children and those with two children are Join nodes. Figure~\ref{fig-tr3} illustrates this transformation. In the fourth traversal, for any node $i$ with a single child $j$, if necessary, replace the edge $\{i,j\}$ by a path such that the bags of the nodes along the path differ by exactly one vertex in each step. This is done from $j$ to $i$ by a sequence of nodes, starting with a Forget node, then alternating between Introduce and Forget nodes, and possibly ending with a sequence of Forget nodes in case the child's bag is larger than its parent's. To ensure that property (3) of a tree decomposition is satisfied, each vertex in the symmetric difference of the original bags $B_i$ and $B_j$ produces a single Forget or Introduce node. Figure~\ref{fig-tr4} illustrates this transformation.

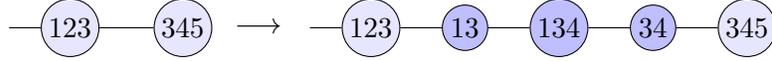
\begin{figure}[h!]
\begin{tikzpicture}
  [scale=1,auto=left,every node/.style={circle,scale=1}]
  \node[draw,circle,fill=blue!10, inner sep=1.5] (a) at (0,2) {$123$};
  \node[draw,circle,fill=blue!10, inner sep=1.5] (b) at (1.5,2) {$345$};
    \node[draw,circle,fill=blue!10, inner sep=1.5] (d) at (4,2) {$123$};
  \node[draw,circle,fill=blue!25, inner sep=1.5] (e) at (5.25,2) {$13$};
  \node[draw,circle,fill=blue!25, inner sep=1.5] (f) at (6.5,2) {$134$};
   \node[draw,circle,fill=blue!25, inner sep=1.5] (g) at (7.75,2) {$34$};
   \node[draw,circle,fill=blue!10, inner sep=1.5] (h) at (9,2) {$345$};
    \node (m) at (2.5,2) {$\longrightarrow$};
     \node (n) at (-1,2) {};
   \node (o) at (3,2) {};
  \path (a) edge node[below]{} (n)
   (o) edge node[below]{} (d)
  (a) edge node[below]{} (b)
        (d) edge node[below]{} (e)
        (e) edge node[below]{} (f)
        (f) edge node[below]{} (g)
        (g) edge node[below]{} (h);
\end{tikzpicture}
\caption{Fourth traversal, which creates an alternating path of Forget and Introduce nodes. The root of the decompositions is located towards the left.}
     \label{fig-tr4}
\end{figure} 
To finish our construction, if the root $i$ has a nonempty bag $B_i$, we append a path to the root where each node is a forget node, until we get an empty bag, which becomes the new root. Note that at most $k+1$ nodes are appended to the path, as $|B_i|\leq k+1$. We call this nice tree decomposition $\mathcal{T}'$.  We claim that $\mathcal{T}'$ is a nice tree decomposition of $G$ with the same width as $\mathcal{T}$. It is easy to see that properties (1) and (3) in the definition of tree decomposition are not violated after each traversal, and that all nodes in $\mathcal{T}$ have one of the types in the definition of nice tree decomposition. Moreover, bag sizes have only been increased in the second traversal, but the size is always bounded by the size of a bag that was already in the tree. Finally, we see that $\mathcal{T}'$ is a decomposition of $G$, in the sense that property (2) in the definition of tree decomposition is satisfied. To this end, if two vertices $u,v \in V(G)$ lie in a bag of $\mathcal{T}$, note that one of the bags of the new tree must contain the largest bag originally containing $u$ and $v$, and therefore it is a tree decomposition of $G$.

Next, we count the number $m'$ of nodes of $\mathcal{T}'$. The Leaf nodes have degree 1, the Join nodes have degree 3 (unless one of the Join nodes is the root and has degree 2) and Forget and Introduce nodes have degree 2 (unless one of the Forget nodes is the root and has degree 1). Let $m'_F$, $m'_I$, $m'_J$ and $m'_L$  denote the number of Forget, Introduce, Join and Leaf nodes in $\mathcal{T}'$, respectively. Since the sum of degrees is $2m'-2$, we have
$$2(m'_F+m'_I+m'_J+m'_L)-2=3m'_J+2(m'_I+m'_F)+m'_L-1,$$
which leads to $m'_J= m'_L-1$. Recall that every vertex is forgotten exactly once, so $m'_F=n$. To see that $m'_L \leq n$, first note that $m'_L=1$ if there are no join nodes. Otherwise, the first and the fourth traversals ensure that there is at least one Forget node on the path between each Leaf node and the first Join node on its path to the root, so that  $m'_L\leq m'_F\leq n$. It follows that $m'_J\leq n-1$. The fourth traversal ensures that the single child of every Introduce node is a Forget node, so that we also have $m'_I\leq m'_F \leq n$. Therefore $\mathcal{T}'$ has at most $4n-1$ nodes.

To conclude the proof, we compute the time required for this transformation. In the first traversal, for each node of $\mathcal{T}$, we need to check whether its bag is contained in its parent's. This is done in time $O(km)$, since we are assuming that the bags are sorted. Note that the other traversals cannot decrease the number of nodes in the tree, so the trees traversed in the remaining steps have size $O(n)$ by the above discussion. The second traversal also requires comparing the bag of each node with its parent's, so it takes time $O(kn)$. The third traversal requires to verify the degree of each node in the tree and possibly create a number of new nodes with bags of size at most $k+1$. This takes time $O(kn)$ because we know that the number of nodes is at most $4n$ \emph{after} all the new nodes have been created. The final traversal also takes order $O(kn)$ because we only need to verify node degrees, compute the symmetric difference of the content of bags of size at most $k+1$ and create new nodes with bags of size at most $k+1$. We again observe that the number of nodes is at most $4n$ \emph{after} all the new nodes have been created. Overall, the changes performed starting from the first traversal take time $O(km+kn)$.

\end{document}